
\documentclass[10pt,journal,compsoc]{IEEEtran}
%


%

%
\ifCLASSOPTIONcompsoc
  \usepackage[nocompress]{cite}
\else
  \usepackage{cite}
\fi
%

%
\ifCLASSINFOpdf
\else
\fi
\usepackage{amssymb}
\usepackage{amsmath}
\usepackage{amsthm}
\usepackage{array}
\usepackage{graphicx}
\usepackage{subfigure}
\usepackage{epstopdf}
\usepackage{cases}
\usepackage{color}
\usepackage{txfonts} 
\usepackage{booktabs} 
\usepackage{float}
\usepackage{multirow}
\usepackage{picinpar}
\usepackage[colorlinks,
            linkcolor=red,
            anchorcolor=green,
            citecolor=blue
            ]{hyperref}
\usepackage{enumitem}
\usepackage[marginal]{footmisc}
\usepackage{fourier}
\usepackage{bookmark}

\hyphenation{op-tical net-works semi-conduc-tor}

\begin{document}
\newtheorem{Theorem}{Theorem}[section]
\newtheorem{Lemma}{Lemma}[section]
\newtheorem{Corollary}{Corollary}[section]
\newtheorem{Remark}{Remark}[section]
\newtheorem{Proposition}{Proposition}[section]
\newtheorem{Definition}{Definition}[section]
\title{$\ell_{1}$-norm  quantile regression screening rule via the dual circumscribed sphere}
%
%
%
%

\author{Pan Shang,~\IEEEmembership{}
        Lingchen~Kong~\IEEEmembership{}
\IEEEcompsocitemizethanks{\IEEEcompsocthanksitem Pan Shang and Lingchen Kong are with the Department
of Applied Mathematics, Beijing Jiaotong University, Beijing, 100044.\protect\\
E-mail: 18118019@bjtu.edu.cn,konglchen@126.com}
\thanks{Manuscript received ; revised .}}

%
%

\markboth{ }%
{Shell \MakeLowercase{\textit{et al.}}: $\ell_{1}$-norm  quantile regression screening rule via the dual circumscribed sphere}
%



\IEEEtitleabstractindextext{%
\begin{abstract}
$\ell_{1}$-norm  quantile regression is a common choice if there exists outlier or heavy-tailed error in high-dimensional data sets. However, it is computationally expensive to solve this problem when the feature size of data is ultra high. As far as we know, existing screening rules can not speed up the computation of the  $\ell_{1}$-norm  quantile regression, which dues to the non-differentiability of the quantile function/pinball loss. In this paper, we introduce the dual circumscribed sphere technique and propose a novel $\ell_{1}$-norm  quantile regression screening rule. Our rule is expressed as the closed-form function of given data and eliminates inactive features with a low computational cost. Numerical experiments on some simulation and real data sets show that this screening rule can be used to eliminate almost all inactive features. Moreover, this rule can help to reduce up to 23 times of computational time, compared with the computation without our screening rule.
\end{abstract}

\begin{IEEEkeywords}
$\ell_{1}$-norm  quantile regression, Screening rule, Dual circumscribed sphere, Computational efficiency.
\end{IEEEkeywords}}

\maketitle

\IEEEdisplaynontitleabstractindextext

%
\IEEEpeerreviewmaketitle

\IEEEraisesectionheading{\section{Introduction}\label{sec:introduction}}

%
%
%
%
\IEEEPARstart{T}{he} availability of high-dimensional data sets has been achieved with the help of advanced technologies.  In order to capture different characteristics in  these data sets, there are plenty of works about the regularized linear regression, composing of loss function and regularizer. The traditional least square method is sensitive to outliers and heavy-tailed errors in data, so there are many researches about the robust regression. One common choice is the $\ell_{1}$-norm quantile regression,  which is given as
\begin{equation*}
\underset{\beta}\min\sum\limits_{i=1}^{n}\rho_{\tau}(y_{i}-\textbf{x}^{T}_{i}\beta)+\lambda\|\beta\|_{1},
\end{equation*}
where $\rho_{\tau}(\cdot)$ is the quantile function/pinball loss and has been studied in many literatures.
See, e.g.,  Li and Zhu \cite{L08},  Bellon and Chernozhukoy \cite{B11},  Fan et al. \cite{F14}, Koenker \cite{K05}, Steinwart and  Christmann \cite{S07}\cite{S11}, Jumutc et al. \cite{JHS13}, Huang et al. \cite{HSS14}, Mkhadri et al. \cite{M17},  Yi and Huang \cite{Y17}, Gu et al. \cite{G18}.

For the $\ell_{1}$-norm quantile regression, there are many algorithms to solve it. For example, Koenker and Ng \cite{K05} introduced a modified version of the Frisch-Newton algorithm. Wu and Lange \cite{W08} propose the greedy coordinate descent algorithm for the LAD Lasso, a special case of the $\ell_{1}$-norm quantile regression.  Yi and Huang \cite{Y17} use the semismooth Newton coordinate descent algorithm to solve the elastic-net penalized quantile regression, involving the $\ell_{1}$-norm quantile regression.  Encouraged by the success of the alternating direction multiplier method (ADMM), Gu et al. \cite{G18} build the proximal ADMM  and a sparse coordinate descent ADMM to solve the quantile regression with different type regularizer, such as $\ell_{1}$ norm, adaptive $\ell_{1}$ norm and the folded concave regularizer. Koenker \cite{K15} set up an R package (quantreg) for solving the  quantile regression. There are no doubt that the standard matlab software for disciplined convex programming CVX (Micheal and Stephen \cite{M13}) can be used to solve the $\ell_{1}$-norm quantile regression. Nevertheless,  it is a challenging work to efficiently solve the $\ell_{1}$-norm quantile regression in ultra high-dimensional settings.

To speed up the computation of the regularized linear regression in ultra high-dimensional data sets, screening rules are proposed to eliminate inactive features. See, e.g., Fan et al. \cite{F08}, Ghaoui et al. \cite{G12}, Tibshirani et al. \cite{T12},  Wang et al. \cite{W15}, Wang et al.  \cite{W15b}, Ndiaye et al. \cite{N17}, Kuang et al. \cite{K17}, Xiang et al. \cite{X17}, Lee et al. \cite{L18}, Ren et al. \cite{R18}, Pan and Xu \cite{P19}. Roughly speaking, there are two types of screening rules: safe rule and heuristic rule. Safe rules mean all screened features are guaranteed to be inactive. For example, Ghaoui et al. \cite{G12} constructed SAFE rules to eliminate predictors for sparse supervised models, which includes Lasso, $\ell_{1}$-norm regularized logistic regression and $\ell_{1}$-norm regularized support vector machine. Wang et al. \cite{W15} built the dual polytope projection (DPP) and the enhanced version EDPP to discard inactive predictors for Lasso. Wang et al. \cite{W15b} analyzed the dual problem of the Fused Lasso and set up its screening rule via the monotonicity of the subdifferentials.  Ndiaye et al. \cite{N17}  built up  statics and dynamic  gap safe screening rules for Group Lasso, which are based on the gap between feasible points of Group Lasso and its dual problem. Ren et al. \cite{R18} proposed a novel bound propagation algorithm to screen out inactive features for general Lasso problems, including different differentiable loss functions. Instead, heuristic screening rules do not possess this advantage, but they can be more efficient by checking the Karush-Kuhn-Tucker (KKT) condition. For example, Tibshirani et al. \cite{T12} proposed strong rules for discarding inactive predictors under the assumption of the unit slope bound.  For robust loss functions, there are a few screening rules. For instance, Chen et al. \cite{C19} proposed the safe screening rules for the regularized Huber regression. To the best of our knowledge, existing screening rules mainly focus on regularized models with differentiable loss functions, such as the quadratic function, logistic function and Huber function. However, the quantile function/pinball loss is not differentiable, which means that the existing screening rules can not be directly applied to the  $\ell_{1}$-norm  quantile regression.

The core of an efficient screening rule is the easily expressed estimation of the dual solution, but the non-differentiability of the quantile function cause a difficulty to estimate the dual solution of the $\ell_{1}$-norm  quantile regression. In this paper, we introduce the dual circumscribed sphere technique to estimate the dual solution and build up a safe feature screening rule. In order to do so, we first present the dual problem of the $\ell_{1}$-norm  quantile regression, which maximizes a linear function under some linear constraints. With the help of duality theory,  we obtain a screening test for the $\ell_{1}$-norm  quantile regression based on its dual solution. This is the basic screening result that can be used to identity the inactive features, but it may be complex to get the dual solution if the data is high-dimensional. To make this screening test implementable, we estimate the dual solution via a region composing with two half spaces and some box constraints. In order to obtain the analytical form of the screening test,  we introduce the dual circumscribed sphere technique. With the help of this technique,  we develop an easily implementable screening rule for the $\ell_{1}$-norm  quantile regression, which has a closed-form formula of given data. As far as we know, this is the first time to use this technique in screening rules. According to this screening rule, we can eliminate the inactive features in data sets and reduce the computational cost of algorithms. For the purpose of illustrating the efficiency of our screening rule, we do some numerical experiments on simulation data and real data. The experiments present that the screening rule  can not only help to eliminate inactive features efficiently, but also save the computational time up to 23 times. Note that we adopt the proximal ADMM (Gu et al. \cite{G18}) to solve the $\ell_{1}$-norm  quantile regression. Nevertheless, our screening rule can be embedded into any algorithm or solver for solving this model.

The rest of this paper is organized as follows. We review basic concepts and results of the $\ell_{1}$-norm  quantile regression in Section 2.  In Section 3, we build up the $\ell_{1}$-norm  quantile regression screening rule via the dual circumscribed sphere technique.  In Section 4, we present the numerical results of some simulation data and real data.  Some conclusions are given in Section 5.

\section{Preliminaries}
Some basic concepts and results of the $\ell_{1}$-norm  quantile regression are reviewed in this section. See  Rockafellar \cite{R70} and Beck \cite{B17} for more details. First, we present the quantile function/pinball loss as follows. For any $\xi\in\mathbb{R}$,
\begin{equation}\label{eq:1}
\rho_{\tau}(\xi)=
\begin{cases}
\tau\xi,&\xi>0;\\
(\tau-1)\xi, &\xi\leq0.
\end{cases}
\end{equation}
Here $\tau\in(0,1)$ indicates the quantile of interest. Obviously, the quantile function is not differentiable at 0, so we introduce the concept of subdifferential.
\begin{Definition}
Let $f:\mathbb{R}^{n}\rightarrow (-\infty,+\infty]$ be a proper closed convex function and let $\textbf{x}\in \mathbb{R}^{n}$. A vector $\textbf{g}\in \mathbb{R}^{n}$ is called a subgradient of $f$ at $\textbf{x}$ if
\begin{center}
$f(\textbf{y})\geq f(\textbf{x})+\langle \textbf{g},\textbf{y}-\textbf{x}\rangle$,  $\forall \textbf{y}\in \mathbb{R}^{n}$.
\end{center}
The set of all subgradients of $f$ at $\textbf{x}$ is called the subdifferential of $f$ at $\textbf{x}$ and is denoted by $\partial f(\textbf{x})$, that is
\begin{center}
$\partial f(\textbf{x})=\left\{\textbf{g}\in \mathbb{R}^{n}: f(\textbf{y})\geq f(\textbf{x})+\langle \textbf{g},\textbf{y}-\textbf{x}\rangle, \forall \textbf{y}\in \mathbb{R}^{n} \right\}$.
\end{center}
\end{Definition}
According to this definition, we get the subdifferential of the quantile function.
\begin{Lemma}
Let $\rho_{\tau}: \mathbb{R}\rightarrow (-\infty,+\infty)$ be  the quantile function with $\tau\in(0,1)$. For any $\xi\in \mathbb{R}$, the subdifferential of $\rho_{\tau}$ at $\xi$ is
\begin{equation*}
\partial{\rho_{\tau}(\xi)}=
\begin{cases}
\{\tau\},&\xi>0;\\
\left[\tau-1,\tau\right],&\xi=0;\\
\{\tau-1\},&\xi<0.
\end{cases}
\end{equation*}
\end{Lemma}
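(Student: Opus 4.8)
The plan is to compute the subdifferential $\partial\rho_{\tau}(\xi)$ directly from the definition given above, treating the three cases $\xi>0$, $\xi<0$, and $\xi=0$ separately. For $\xi\neq0$ the quantile function is locally linear, hence differentiable, so the subdifferential must reduce to the singleton containing the classical derivative; for $\xi=0$ I expect to recover the interval $[\tau-1,\tau]$ whose endpoints are exactly the two one-sided slopes. I would organize the argument around the standard fact that for a convex function the subdifferential at a point of differentiability is the singleton of its gradient, and that at a kink it is the closed interval spanned by the left and right derivatives.

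First I would handle $\xi>0$. On the open half-line $(0,+\infty)$ we have $\rho_{\tau}(\xi)=\tau\xi$, which is affine and therefore differentiable with $\rho_{\tau}'(\xi)=\tau$. Since $\rho_{\tau}$ is convex and differentiable at such $\xi$, the subgradient inequality $\rho_{\tau}(\eta)\geq\rho_{\tau}(\xi)+g(\eta-\xi)$ for all $\eta$ forces $g=\tau$, giving $\partial\rho_{\tau}(\xi)=\{\tau\}$. The case $\xi<0$ is entirely symmetric: there $\rho_{\tau}(\xi)=(\tau-1)\xi$ is affine with derivative $\tau-1$, so $\partial\rho_{\tau}(\xi)=\{\tau-1\}$.

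The main work, and the place where one must be careful, is the kink at $\xi=0$. Here $\rho_{\tau}(0)=0$, and a scalar $g$ belongs to $\partial\rho_{\tau}(0)$ if and only if $\rho_{\tau}(\eta)\geq g\eta$ for every $\eta\in\mathbb{R}$. I would split this single condition on the sign of $\eta$: for $\eta>0$ it reads $\tau\eta\geq g\eta$, i.e. $g\leq\tau$; for $\eta<0$ it reads $(\tau-1)\eta\geq g\eta$, i.e. $g\geq\tau-1$; for $\eta=0$ it holds trivially. Intersecting these constraints yields precisely $\tau-1\leq g\leq\tau$, so $\partial\rho_{\tau}(0)=[\tau-1,\tau]$. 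The only subtlety worth flagging is the sign flip when dividing the inequality by $\eta<0$, which reverses the direction and is what produces the lower endpoint $\tau-1$ rather than an upper bound.

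Assembling the three cases gives the claimed piecewise formula, completing the proof. I do not anticipate any genuine obstacle here; the entire argument is an elementary application of the definition of subgradient, and the statement is essentially a worked example illustrating the notion of subdifferential introduced immediately before it.
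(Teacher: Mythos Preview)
Your proof is correct and follows exactly the approach the paper implicitly takes: the paper does not give an explicit proof of this lemma at all, merely stating ``According to this definition, we get the subdifferential of the quantile function'' and leaving the verification to the reader. Your case-by-case application of the subgradient inequality is precisely the elementary computation the paper omits.
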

For any $\textbf{\textit{x}}=(x_{1},x_{2},\cdots,x_{n})^{T}\in \mathbb{R}^{n}$, the $\ell_{1}$ norm of $\textbf{\textit{x}}$ is defined as $\|\textbf{\textit{x}}\|_{1}=\sum\limits_{i=1}^{n}|x_{i}|$. By simply calculation, the subdifferential of $\|\textbf{\textit{x}}\|_{1}$ is
$\partial\|\textbf{\textit{x}}\|_{1}=\left\{\left(\partial|x_{1}|,\cdots,\partial|x_{n}|\right)^{T}\right\}$,
 where
\begin{equation*}
\partial|x_{i}|=
\begin{cases}
\{1\},&x_{i}>0;\\
\left[-1,1\right],&x_{i}=0;\\
\{-1\},&x_{i}<0.
\end{cases}
\end{equation*}
Now, we review the definition of the conjugate function.
\begin{Definition}
Let $f:\mathbb{R}^{n}\rightarrow (-\infty,+\infty]$ be a proper closed convex function. The conjugate function of $f$ is denoted as $f^{*}$ and $f^{*}:\mathbb{R}^{n}\rightarrow (-\infty,+\infty]$  is defined as
\begin{center}
$f^{*}(\textbf{y})=\underset{\textbf{x}\in \mathbb{R}^{n}}\max\left\{\langle \textbf{y},\textbf{x}\rangle-f(\textbf{x})\right\}$, $\forall \textbf{y}\in \mathbb{R}^{n}.$
\end{center}
\end{Definition}
Based on this definition, we can get the conjugate of the quantile function.
\begin{Lemma}
Let $\rho_{\tau}: \mathbb{R}\rightarrow (-\infty,+\infty)$ be  the quantile function with $\tau\in(0,1)$. For any $\nu\in \mathbb{R}$, the conjugate of $\rho_{\tau}$ at $\nu$ is
\begin{equation*}
\rho^{*}_{\tau}(\nu)=\underset{\xi}\max\left\{\xi\nu-\rho_{\tau}(\xi)\right\}=
\begin{cases}
0,&\tau-1\leq\nu\leq\tau;\\
+\infty, &\rm{otherwise}.
\end{cases}
\end{equation*}
\end{Lemma}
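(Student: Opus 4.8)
The plan is to evaluate the supremum $\rho^{*}_{\tau}(\nu)=\sup_{\xi}\{\xi\nu-\rho_{\tau}(\xi)\}$ directly by exploiting the piecewise-linear form of $\rho_{\tau}$ given in \eqref{eq:1}. Setting $g(\xi):=\xi\nu-\rho_{\tau}(\xi)$, I would split the real line at the kink $\xi=0$. On the ray $\xi>0$ the loss equals $\tau\xi$, so $g(\xi)=(\nu-\tau)\xi$, while on $\xi\leq0$ the loss equals $(\tau-1)\xi$, so $g(\xi)=(\nu-\tau+1)\xi$. On each ray $g$ is therefore linear in $\xi$, which reduces the entire computation to inspecting the sign of the two slopes $\nu-\tau$ and $\nu-\tau+1$.

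Next I would examine when each linear piece remains bounded above. On the ray $\xi>0$, the term $(\nu-\tau)\xi$ tends to $+\infty$ as $\xi\to+\infty$ unless its slope satisfies $\nu-\tau\leq0$, i.e. $\nu\leq\tau$; when this holds, the piece is maximized in the limit $\xi\to0^{+}$ with supremal value $0$. Symmetrically, on the ray $\xi\leq0$ the term $(\nu-\tau+1)\xi$ tends to $+\infty$ as $\xi\to-\infty$ unless $\nu-\tau+1\geq0$, i.e. $\nu\geq\tau-1$; when this holds, the piece attains its maximum $0$ at $\xi=0$.

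Finally I would combine the two conditions. Both slopes point in the bounding direction precisely when $\tau-1\leq\nu\leq\tau$, and in that regime the two pieces share the common maximizer $\xi=0$ with $g(0)=0$, so $\rho^{*}_{\tau}(\nu)=0$. If instead $\nu>\tau$ the first piece is unbounded above, and if $\nu<\tau-1$ the second piece is unbounded above; in either case $\rho^{*}_{\tau}(\nu)=+\infty$. This yields exactly the stated two-case formula. The computation is routine once the split is made; the only point that needs care is the boundary bookkeeping—confirming that the supremum over the \emph{open} ray $\xi>0$ is still $0$ (attained only in the limit $\xi\to0^{+}$) and that the two rays glue consistently at $\xi=0$—so that the effective domain of $\rho^{*}_{\tau}$ comes out to be exactly the closed interval $[\tau-1,\tau]$.
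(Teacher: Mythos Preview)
Your proposal is correct and is the natural direct computation. The paper itself does not supply a proof of this lemma; it simply states the result as following from Definition~2.2, so your piecewise analysis of the slopes $\nu-\tau$ and $\nu-\tau+1$ is exactly the calculation being implicitly invoked.
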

In addition, the conjugate of $\ell_{1}$ norm is the indictor function of $\ell_{\infty}$ norm. That is, for any $\textbf{\textit{y}}\in \mathbb{R}^{n}$,
\begin{equation*}
\|\textbf{\textit{y}}\|^{*}_{1}=\underset{\textbf{\textit{x}}\in \mathbb{R}^{n}}\max\left\{\langle \textbf{\textit{y}},\textbf{\textit{x}}\rangle-\|\textbf{\textit{x}}\|_{1}\right\}=\delta_{\|\cdot\|_{\infty}\leq1}(\textbf{\textit{y}})=
\begin{cases}
0,&\|\textbf{\textit{y}}\|_{\infty}\leq1;\\
+\infty, &\rm{otherwise},
\end{cases}
\end{equation*}
where $\|\textbf{\textit{y}}\|_{\infty}$ is defined as
$\|\textbf{\textit{y}}\|_{\infty}=\max\{|y_{1}|,\cdots,|y_{n}|\}.$
Another norm that we use in this paper is $\ell_{2}$ norm. For any $\textbf{\textit{x}}=(x_{1},\cdots,x_{n})^{T}\in \mathbb{R}^{n}$, $\|\textbf{\textit{x}}\|_{2}=\sqrt{x^{2}_{1}+x^{2}_{2}+\cdots +x^{2}_{n}}$.

\begin{Definition}
Let $f:\mathbb{R}^{n}\rightarrow (-\infty,+\infty]$ be a proper closed convex function. The proximal mapping of $f$ is the operator given by
\begin{center}
prox$_{f}(\textbf{x})=\underset{\textbf{u}\in \mathbb{R}^{n}}{\arg\min}\left\{f(\textbf{u})+\frac{1}{2}\|\textbf{u}-\textbf{x}\|^{2}_{2}\right\},$ for any $\textbf{x}\in \mathbb{R}^{n}$.
\end{center}
\end{Definition}
Based on this definition, we review the proximal mapping of $\ell_{1}$ norm (Beck \cite{B17}) in the following lemma.
\begin{Lemma}
Denote $f=\lambda\|\textbf{x}\|_{1}:\mathbb{R}^{n}\rightarrow (-\infty,+\infty)$. For any $\textbf{x}\in \mathbb{R}^{n}$, the proximal mapping of $f$ is
\begin{center}
prox$_{f}(\textbf{x})=[|\textbf{x}|-\lambda e]_{+}\odot \textit{sign}(\textbf{x}),$
\end{center}
where $\odot$ denotes the Hadamard product and $\textit{sign}(\textit{\textbf{x}})=(\textit{sign}(\textit{x}_{1}),\textit{sign}(\textit{x}_{2}),\cdots,\textit{sign}(\textit{x}_{n}))^{T}$ with
$$\textit{sign}(\textit{x}_{i})=\begin{cases}
1,&x_{i}>0\\
0,&x_{i}=0\\
-1,&x_{i}<0
\end{cases}.$$
\end{Lemma}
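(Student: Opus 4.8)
The plan is to exploit the separability of the minimization that defines the proximal mapping and reduce it to $n$ independent one-dimensional problems, each solved through the first-order optimality condition for a convex function. Writing the definition explicitly,
\[
\mathrm{prox}_f(\textbf{x})=\underset{\textbf{u}\in\mathbb{R}^{n}}{\arg\min}\left\{\lambda\|\textbf{u}\|_{1}+\frac{1}{2}\|\textbf{u}-\textbf{x}\|_{2}^{2}\right\},
\]
I would first observe that $\lambda\|\textbf{u}\|_{1}=\sum_{i=1}^{n}\lambda|u_{i}|$ and $\frac{1}{2}\|\textbf{u}-\textbf{x}\|_{2}^{2}=\sum_{i=1}^{n}\frac{1}{2}(u_{i}-x_{i})^{2}$, so the objective is a sum of terms each depending on a single coordinate. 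The minimization therefore decouples, and it suffices to solve, for each $i$, the scalar problem $\min_{u_{i}\in\mathbb{R}}g_{i}(u_{i})$ with $g_{i}(u_{i})=\lambda|u_{i}|+\frac{1}{2}(u_{i}-x_{i})^{2}$.

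Next, since $g_{i}$ is a proper closed convex function, $u_{i}^{*}$ minimizes $g_{i}$ if and only if $0\in\partial g_{i}(u_{i}^{*})$. Applying the sum rule together with the subdifferential of the absolute value recalled earlier in this section, the optimality condition becomes $0\in(u_{i}^{*}-x_{i})+\lambda\,\partial|u_{i}^{*}|$. I would then split into three cases according to the sign of $u_{i}^{*}$: when $u_{i}^{*}>0$ the condition forces $u_{i}^{*}=x_{i}-\lambda$, which is consistent only if $x_{i}>\lambda$; when $u_{i}^{*}<0$ it forces $u_{i}^{*}=x_{i}+\lambda$, consistent only if $x_{i}<-\lambda$; and when $u_{i}^{*}=0$ the condition reduces to $x_{i}\in\lambda[-1,1]$, i.e. $|x_{i}|\le\lambda$. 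The only delicate point in the whole argument is the non-differentiability at the origin, which is exactly what the set-valued subdifferential $\partial|\cdot|$ is built to handle; I do not expect any serious obstacle beyond correctly matching these three subdifferential regimes to the three branches of the answer.

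Finally, I would assemble the three cases into the single expression
\[
u_{i}^{*}=\begin{cases} x_{i}-\lambda, & x_{i}>\lambda;\\ 0, & |x_{i}|\le\lambda;\\ x_{i}+\lambda, & x_{i}<-\lambda,\end{cases}
\]
and verify by inspection that this equals $[\,|x_{i}|-\lambda\,]_{+}\,\textit{sign}(x_{i})$. Stacking these scalar solutions over $i=1,\dots,n$, with $e$ the all-ones vector and the coordinatewise operations written through the Hadamard product, gives $\mathrm{prox}_{f}(\textbf{x})=[\,|\textbf{x}|-\lambda e\,]_{+}\odot\textit{sign}(\textbf{x})$, which is the claimed closed form.
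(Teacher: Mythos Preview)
Your argument is correct and is the standard derivation of the soft-thresholding operator: separability reduces the problem to $n$ one-dimensional convex minimizations, each solved by the optimality condition $0\in (u_i^*-x_i)+\lambda\,\partial|u_i^*|$, and the three-case analysis you give is exactly right.

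There is nothing to compare against in the paper itself: the authors state this lemma without proof, attributing it to Beck~\cite{B17}. So your proposal is more detailed than what the paper provides, and is in fact the usual proof one finds in that reference.
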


By simple calculation, we get the proximal mapping of quantile function.
\begin{Lemma}
Let $\rho_{\tau}: \mathbb{R}\rightarrow (-\infty,+\infty)$ be  the quantile function with $\tau\in(0,1)$. For any $\xi\in \mathbb{R}$, the  proximal mapping of $\rho_{\tau}$ at $\xi$ is
\begin{align*}
\textit{prox}_{\rho_{\tau}}(\xi)&=\underset{\upsilon\in R}{\arg\min}\left\{\rho_{\tau}(\upsilon)+\frac{1}{2}(\xi-\upsilon)^{2}\right\}\\
&=
\begin{cases}
\xi-\tau,&\xi\geq\tau;\\
0,&\tau-1<\xi<\tau;\\
\xi-(\tau-1), &\xi\leq\tau-1.
\end{cases}
\end{align*}
\end{Lemma}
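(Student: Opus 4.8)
The plan is to characterize the minimizer of the strictly convex objective $g(\upsilon):=\rho_{\tau}(\upsilon)+\frac{1}{2}(\xi-\upsilon)^{2}$ through its first-order optimality condition, and then resolve the resulting subgradient inclusion by a case analysis on the sign of the minimizer. First I would observe that $g$ is the sum of the convex quantile function $\rho_{\tau}$ and the strictly convex quadratic $\frac{1}{2}(\xi-\upsilon)^{2}$; since the quadratic term dominates the linear growth of $\rho_{\tau}$, the function $g$ is strictly convex and coercive, hence it attains its minimum at a unique point $\upsilon^{\ast}$. This makes the $\arg\min$ in the definition of $\text{prox}_{\rho_{\tau}}(\xi)$ well defined as a single value, so it suffices to identify $\upsilon^{\ast}$.

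Next I would write down the optimality condition. Because the quadratic term is finite-valued and differentiable everywhere, the sum rule for subdifferentials applies without any qualification, and $\partial g(\upsilon)=\partial\rho_{\tau}(\upsilon)+(\upsilon-\xi)$. The point $\upsilon^{\ast}$ minimizes $g$ if and only if $0\in\partial g(\upsilon^{\ast})$, which I would rewrite as
\begin{equation*}
0\in\partial\rho_{\tau}(\upsilon^{\ast})+(\upsilon^{\ast}-\xi),
\qquad\text{equivalently}\qquad
\xi-\upsilon^{\ast}\in\partial\rho_{\tau}(\upsilon^{\ast}).
\end{equation*}
The heart of the argument is to substitute the explicit subdifferential of $\rho_{\tau}$ established in the earlier lemma and split into the three regimes $\upsilon^{\ast}>0$, $\upsilon^{\ast}=0$, and $\upsilon^{\ast}<0$.

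In the regime $\upsilon^{\ast}>0$ the inclusion forces $\xi-\upsilon^{\ast}=\tau$, so $\upsilon^{\ast}=\xi-\tau$, which is genuinely positive exactly when $\xi>\tau$. In the regime $\upsilon^{\ast}<0$ it forces $\xi-\upsilon^{\ast}=\tau-1$, so $\upsilon^{\ast}=\xi-(\tau-1)$, which is negative exactly when $\xi<\tau-1$. In the remaining regime $\upsilon^{\ast}=0$ the inclusion reduces to $\xi\in[\tau-1,\tau]$, covering precisely the intermediate values of $\xi$. Reading these three implications as a function of $\xi$ reproduces the stated piecewise formula.

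Finally I would verify that the three cases partition all of $\mathbb{R}$ and agree on the shared endpoints $\xi=\tau$ and $\xi=\tau-1$, where the linear branches both return $0$; this is why the statement may use the non-strict inequalities $\xi\geq\tau$ and $\xi\leq\tau-1$ for the outer branches. The only point requiring care—and the main, though mild, obstacle—is the non-differentiable kink at $\upsilon=0$: there the ordinary derivative must be replaced by the whole interval $[\tau-1,\tau]$, and one must check that the three regimes are mutually consistent at the boundaries rather than overlapping or leaving a gap. The uniqueness of $\upsilon^{\ast}$ established in the first step then guarantees that the value produced by the applicable case is exactly the proximal point.
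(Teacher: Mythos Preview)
Your argument is correct. The paper itself does not give a proof of this lemma beyond the phrase ``By simple calculation, we get the proximal mapping of quantile function,'' so your subgradient-based case analysis in fact supplies the details the paper omits; there is nothing to compare against beyond noting that your approach is the natural one.
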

\section{$\ell_{1}$-norm  quantile regression screening rule}
In this section, we build up the $\ell_{1}$-norm  quantile regression screening rule. At first, we introduce the dual circumscribed sphere technique, which is used to estimate the dual solution. With the help of this technique, we obtain the safe feature screening rule in this paper.

The $\ell_{1}$-norm  quantile regression (Li and Zhu \cite{L08}) is
\begin{equation}\label{eq:2}
\underset{\beta}\min\sum\limits_{i=1}^{n}\rho_{\tau}(y_{i}-\textbf{\textit{x}}^{T}_{i}\beta)+\lambda\|\beta\|_{1},
\end{equation}
where $\beta=(\beta_{1},\cdots, \beta_{p})^{T}\in \mathbb{R}^{p}$ is the unknown coefficient vector and $\lambda>0$ is the tuning parameter. For ease of expression, we denote the response variable $\textbf{\textit{y}}=(y_{1},\cdots,y_{n})^{T}\in \mathbb{R}^{n}$  and the prediction matrix $X=\left(\textbf{\textit{x}}_{1},\textbf{\textit{x}}_{2},\cdots,\textbf{\textit{x}}_{n}\right)^{T}=(X_{\cdot1},\cdots,X_{\cdot p})\in \mathbb{R}^{n\times p}$, where $\textbf{\textit{x}}_{i}\in \mathbb{R}^{p}$ denotes the $i_{th}$ sample and $X_{\cdot j}\in \mathbb{R}^{n}$ the $j_{th}$ feature. In order to emphasize that the solution of (\ref{eq:2}) relies on the choice of $\lambda$, we denote $\beta^{*}(\lambda)$ as it. If $\beta^{*}_{j}(\lambda)=0$, the $j_{th}$ feature is uncorrelated to the $\tau_{th}$ quantile of $\textbf{\textit{y}}$ and we call it the inactive feature. Since that the quantile function is not differentiable, existing screening rules can not be directly applied to the $\ell_{1}$-norm  quantile regression  (\ref{eq:2}). To overcome the challenge of the non-differentiability of the quantile function, we propose the dual circumscribed technique, which is used to estimate the dual solution. In order to do so, we introduce a variable $\alpha=(\alpha_{1},\cdots,\alpha_{n})^{T}\in \mathbb{R}^{n}$ and transform the model  (\ref{eq:2}) to a constraint problem as below.
\begin{equation}\label{eq:3}
\begin{split}
&\underset{\beta,\alpha}\min\sum\limits_{i=1}^{n}\rho_{\tau}(\alpha_{i})+\lambda\|\beta\|_{1}\\
&s.t. \quad y_{i}-\textbf{\textit{x}}^{T}_{i}\beta-\alpha_{i}=0, i=1,\cdots,n.
\end{split}
\end{equation}
Similarly, the solution of (\ref{eq:3}) is denoted as $(\beta^{*}(\lambda),\alpha^{*}(\lambda))$, where $\beta^{*}(\lambda)$ is the solution of (\ref{eq:2}). By introducing the Lagrangian multiplier
$\theta=(\theta_{1},\cdots,\theta_{n})^{T}\in \mathbb{R}^{n},$
 we obtain the Lagrangian function of the model (\ref{eq:3}), which is
\begin{center}
$\textit{L}\left(\beta,\alpha,\theta\right)
=\sum\limits_{i=1}^{n}\rho_{\tau}(\alpha_{i})+\lambda\|\beta\|_{1}
+\sum\limits_{i=1}^{n}\theta_{i}(y_{i}-\textbf{\textit{x}}^{T}_{i}\beta-\alpha_{i}).$
\end{center}
By direct computation, we obtain that
\begin{align*}
&\underset{\beta,\alpha}\min\textit{L}\left(\beta,\alpha,\theta\right)\\
&=\underset{\beta}\min \left\{\lambda\|\beta\|_{1}-\langle X^{T}\theta,\beta\rangle\right\}
+\sum\limits_{i=1}^{n}\underset{\alpha_{i}}\min \left\{\rho_{\tau}(\alpha_{i})-\theta_{i}\alpha_{i}\right\}+\langle\theta,\textbf{\textit{y}}\rangle\\
&=-\delta_{\|\cdot\|_{\infty}\leq\lambda}\left(X^{T}\theta\right)
-\sum\limits_{i=1}^{n}\delta_{\tau-1\leq\theta_{i}\leq\tau}\left(\theta_{i}\right)+\langle\theta,\textbf{\textit{y}}\rangle,
\end{align*}
where the last equality is based on the result of Lemma 2.2 and the argument after it. Therefore, the Lagrangian dual form of the model (\ref{eq:2}) is $$\underset{\theta}\max\underset{\beta,\alpha}\min\textit{L}\left(\beta,\alpha,\theta\right).$$ That is,
\begin{equation}\label{eq:4}
\begin{split}
&\underset{\theta}\max \quad\langle\theta,\textbf{\textit{y}}\rangle\\
&s.t.\quad\left\|X^{T}\theta\right\|_{\infty}\leq \lambda,\\
&\quad\quad\quad \tau-1\leq\theta_{i}\leq\tau, \quad i=1,\cdots,n.
\end{split}
\end{equation}
Denote the  solution of the dual problem (\ref{eq:4}) as $\theta^{*}(\lambda)$. The Karush-Kuhn-Tucker (KKT) system of  (\ref{eq:3}) and  (\ref{eq:4}) is
\begin{eqnarray}\label{eq:5}
\begin{cases}
X^{T}\theta \in \lambda\partial\|\beta\|_{1},\\
y_{i}-\textbf{\textit{x}}^{T}_{i}\beta-\alpha_{i}=0,\\
\theta_{i}\in \partial{\rho_{\tau}}(\alpha_{i}), \quad i=1,2,\cdots,n.
\end{cases}
\end{eqnarray}
If a pair $\left(\beta(\lambda), \alpha(\lambda), \theta(\lambda)\right)$ satisfies the KKT system, it is called the KKT point of  (\ref{eq:3}) and  (\ref{eq:4}). Note that $(0,\textbf{y})$ is a feasible point of  the problem  (\ref{eq:3}). Slater constraint qualification (Rockafellar \cite{R70}) holds on this problem. Hence, we easily show the following duality theorem.
\begin{Theorem}\label{thm:1}
(\textbf{Strong duality theorem}) The optimal solutions of problems (\ref{eq:3}) and (\ref{eq:4}) compose a KKT point. Moreover, the optimal values of these two problems are same.
\end{Theorem}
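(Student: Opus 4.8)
The plan is to treat (\ref{eq:3}) as a convex program with affine equality constraints and then invoke the standard strong duality theorem for convex optimization (Rockafellar \cite{R70}). The two assertions then split cleanly: equality of optimal values follows from the absence of a duality gap, and the KKT-point claim follows because, under a constraint qualification, the KKT conditions are both necessary and sufficient for the optimal primal--dual pair. Essentially all of the Lagrangian bookkeeping has already been done in the computation of $\min_{\beta,\alpha} L(\beta,\alpha,\theta)$ preceding the statement, so the work is to assemble these pieces rigorously.

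First I would verify that (\ref{eq:3}) is genuinely a convex program: the objective $\sum_{i=1}^{n}\rho_{\tau}(\alpha_{i})+\lambda\|\beta\|_{1}$ is a finite sum of convex functions (each $\rho_{\tau}$ is convex by Lemma~2.1, and $\|\cdot\|_{1}$ is a norm), and each of the $n$ constraints $y_{i}-\textbf{\textit{x}}_{i}^{T}\beta-\alpha_{i}=0$ is affine in $(\beta,\alpha)$. Next I would confirm the constraint qualification: as the text already observes, $(\beta,\alpha)=(0,\textbf{\textit{y}})$ is feasible, and since every constraint is an affine equality, the refined Slater condition reduces to plain feasibility and therefore holds. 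Applying the strong duality theorem then yields zero duality gap, so the optimal value of (\ref{eq:3}) equals the optimal value of its Lagrangian dual; because the dual objective computed above is exactly the objective of (\ref{eq:4}), this establishes the equality of optimal values. I would also note that the dual feasible region of (\ref{eq:4}) is nonempty (e.g. $\theta=0$ is feasible, using $\tau\in(0,1)$) and compact (the box $\tau-1\le\theta_{i}\le\tau$ bounds it), so the linear dual objective attains its maximum at some $\theta^{*}(\lambda)$; together with attainment on the primal side this guarantees the KKT point actually exists.

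For the KKT claim I would write out the stationarity conditions of $L(\beta,\alpha,\theta)$ at an optimal triple. Stationarity in $\beta$ gives $0\in\lambda\partial\|\beta\|_{1}-X^{T}\theta$, i.e. $X^{T}\theta\in\lambda\partial\|\beta\|_{1}$; stationarity in each $\alpha_{i}$ gives $0\in\partial\rho_{\tau}(\alpha_{i})-\theta_{i}$, i.e. $\theta_{i}\in\partial\rho_{\tau}(\alpha_{i})$; and primal feasibility gives $y_{i}-\textbf{\textit{x}}_{i}^{T}\beta-\alpha_{i}=0$. These are precisely the three lines of the system (\ref{eq:5}), and under the verified constraint qualification they characterize optimality exactly, so the optimal solutions of (\ref{eq:3}) and (\ref{eq:4}) together form a KKT point.

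The main obstacle is not the duality theorem itself but the non-differentiability of both $\rho_{\tau}$ and $\|\cdot\|_{1}$: the gradient stationarity conditions must be replaced by the subgradient inclusions above, which requires justifying the subdifferential sum rule and the separation of the minimization over $\beta$ and $\alpha$ (the objective is separable and each summand is continuous, so the sum rule applies). Once that calculus is in place the result is, as the authors say, a routine consequence of convex duality.
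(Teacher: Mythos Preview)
Your proposal is correct and follows precisely the approach the paper indicates: the paper does not give a detailed proof but simply notes that $(0,\textbf{\textit{y}})$ is feasible, invokes Slater's constraint qualification and Rockafellar \cite{R70}, and asserts the theorem. Your write-up fills in exactly the details the authors left implicit (convexity, the refined Slater condition for affine constraints, dual attainment via compactness, and the subgradient form of the KKT stationarity conditions), so there is no divergence in method.
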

According to Theorem \ref{thm:1}, for any $\lambda>0$, the solutions of problems (\ref{eq:3}) and (\ref{eq:4}) satisfies that
\begin{center}
$X^{T}\theta^{*}(\lambda)\in \lambda\partial\|\beta^{*}(\lambda)\|_{1}$.
\end{center}
This, together with the subdifferential of $\ell_{1}$ norm, leads to the basic idea of our screening test.
 \begin{Theorem}\label{thm:2}
Let $j\in\{1,\cdots,p\}$. For any tuning parameter $\lambda>0$, if the solution of (\ref{eq:4}) satisfies that
$$|X_{\cdot j}^{T}\theta^{*}(\lambda)|<\lambda,$$
then $\beta_{j}^{*}(\lambda)=0$, which means that $X_{\cdot j}$ is uncorrelated to the $\tau_{th}$ quantile of $\textbf{\textit{y}}$.
\end{Theorem}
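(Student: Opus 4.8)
The plan is to argue by contraposition: I assume $\beta_j^*(\lambda)\neq 0$ and derive $|X_{\cdot j}^T\theta^*(\lambda)|=\lambda$, which contradicts the hypothesis $|X_{\cdot j}^T\theta^*(\lambda)|<\lambda$. The entire argument rests on the first relation of the KKT system (\ref{eq:5}), namely the stationarity inclusion $X^T\theta\in\lambda\partial\|\beta\|_1$, which Theorem \ref{thm:1} guarantees holds for the optimal pair $(\beta^*(\lambda),\theta^*(\lambda))$.

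First I would invoke the strong duality theorem to assert that the optimal dual and primal solutions satisfy $X^T\theta^*(\lambda)\in\lambda\,\partial\|\beta^*(\lambda)\|_1$. Using the coordinatewise description of $\partial\|\cdot\|_1$ recorded in Section 2 (the subdifferential splits as the product of the scalar subdifferentials $\partial|\beta_j|$), this membership decouples componentwise into $X_{\cdot j}^T\theta^*(\lambda)\in\lambda\,\partial|\beta_j^*(\lambda)|$ for every index $j\in\{1,\dots,p\}$.

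Next I would read off the consequence of $\beta_j^*(\lambda)\neq 0$. If $\beta_j^*(\lambda)>0$ then $\partial|\beta_j^*(\lambda)|=\{1\}$, forcing $X_{\cdot j}^T\theta^*(\lambda)=\lambda$; if $\beta_j^*(\lambda)<0$ then $\partial|\beta_j^*(\lambda)|=\{-1\}$, forcing $X_{\cdot j}^T\theta^*(\lambda)=-\lambda$. In either case $|X_{\cdot j}^T\theta^*(\lambda)|=\lambda$, contradicting the assumed strict inequality. Hence $\beta_j^*(\lambda)=0$, and $X_{\cdot j}$ is inactive as claimed.

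I expect no genuine obstacle here, since the result is essentially immediate once the KKT stationarity condition is available. The only point deserving care is the direction of the implication: because the test is a strict inequality, it can only certify inactivity and does not detect active features (on the boundary $|X_{\cdot j}^T\theta^*(\lambda)|=\lambda$ the inclusion still admits $\beta_j^*(\lambda)=0$). I would therefore stress that the rule is safe but one-sided, which is precisely the intended behavior of a safe screening rule. The practical difficulty, to be addressed in the sequel rather than in this proof, is that $\theta^*(\lambda)$ is generally unavailable in closed form, motivating the dual circumscribed sphere estimate that makes the test implementable.
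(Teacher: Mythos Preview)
Your proposal is correct and follows essentially the same approach as the paper: the paper invokes Theorem~\ref{thm:1} to obtain $X^{T}\theta^{*}(\lambda)\in \lambda\partial\|\beta^{*}(\lambda)\|_{1}$ and then appeals to the explicit form of the subdifferential of the $\ell_{1}$ norm, which is exactly the stationarity-plus-contrapositive argument you spell out. The paper treats this as immediate and provides no further detail beyond that one sentence, so your writeup is, if anything, more explicit than the original.
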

In this paper, we call $|X_{\cdot j}^{T}\theta^{*}(\lambda)|<\lambda$ as the screening test. For any tuning parameter $\lambda>0$, according to this theorem, some inactive features  can be eliminated. Therefore, this screening test reduces the computational cost of solving the model (\ref{eq:2}) when the dual solution is known. However, the dual solution is unknown in most cases and  it may need high computational cost to solve the dual problem in high-dimensional settings.  So, we estimate the dual solution in the following part.

Before estimating the dual solution, we present the lower bound of the tuning parameter such that the solution of  (\ref{eq:2}) is zero.

Let $\theta_{max}=(\theta^{(1)}_{max},\cdots,\theta^{(n)}_{max})^{T}\in\mathcal{F}\subset\mathbb{R}^{n}$ with
\begin{equation}\label{eq:new}
\theta^{(i)}_{max}=\partial\rho_{\tau}(y_{i})=
\begin{cases}
\{\tau\},&y_{i}>0\\
\left[\tau-1,\tau\right],&y_{i}=0, \quad i=1,2,\cdots,n.\\
\{\tau-1\},&y_{i}<0
\end{cases}
\end{equation}
Denote
\begin{eqnarray}\label{eq:7}
I_{1}=\left\{i\Big|y_{i}>0,i=1,2,\cdots,n\right\}, I_{2}=\left\{i\Big|y_{i}<0,i=1,2,\cdots,n\right\}
\end{eqnarray}\
and $\Delta=(\Delta_{1},\Delta_{2},\cdots,\Delta_{p})^{T}\in\mathbb{R}^{p}$ with
\begin{equation}\label{eq:new2}
\Delta_{j}=\max~\begin{Bmatrix}
\zeta_{j}+\sum\limits_{i\notin I_{1}\cup I_{2},X_{ij}\geq0}\tau X_{ij}+\sum\limits_{i\notin I_{1}\cup I_{2},X_{ij}<0}(\tau-1)X_{ij},\\ -\zeta_{j}-\sum\limits_{i\notin I_{1}\cup I_{2},X_{ij}\geq0}(\tau-1)X_{ij}-\sum\limits_{i\notin I_{1}\cup I_{2},X_{ij}<0}\tau X_{ij}
\end{Bmatrix}
\end{equation}
where $\zeta_{j}=\tau\sum\limits_{i\in I_{1}}X_{ij}+(\tau-1)\sum\limits_{i\in I_{2}}X_{ij}$ and $j=1,2,\cdots,p$. We then obtain the next lemma.
\begin{Lemma} \label{lem:1}
Denote $\lambda_{max}=\|\Delta\|_{\infty}$. The following statements hold.\\
(i) If $\beta^{*}(\lambda)=0$, then $\lambda\geq\lambda_{max}$.\\
(ii) If $\lambda>\lambda_{max}$, then $\beta^{*}(\lambda)=0$.
\end{Lemma}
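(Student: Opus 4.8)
The guiding idea is that $\beta^{*}(\lambda)=0$ is really a statement about the \emph{subdifferential} of the objective of (\ref{eq:2}) at the origin, so I would attack both parts through the Fermat optimality condition rather than through the dual directly. Write $F(\beta)=\sum_{i=1}^{n}\rho_{\tau}(y_{i}-\textbf{\textit{x}}_{i}^{T}\beta)+\lambda\|\beta\|_{1}$. Since $F$ is proper closed convex and $\lambda\|\cdot\|_{1}$ is finite-valued, the sum rule for subdifferentials applies and $\beta^{*}(\lambda)=0$ holds if and only if $0\in\partial F(0)$. Applying the affine chain rule to each term $\rho_{\tau}(y_{i}-\textbf{\textit{x}}_{i}^{T}\beta)$ together with Lemma 2.1, the loss contributes $\{-X^{T}\theta:\theta\in\mathcal{F}\}$, where $\mathcal{F}=\{\theta:\theta_{i}\in\partial\rho_{\tau}(y_{i})\}$ is exactly the box described by (\ref{eq:new}), while the regularizer contributes $\lambda\,\partial\|\cdot\|_{1}(0)=\{v:\|v\|_{\infty}\le\lambda\}$. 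Hence the first step is to establish the equivalence
\begin{equation*}
\beta^{*}(\lambda)=0\quad\Longleftrightarrow\quad \exists\,\theta\in\mathcal{F}\ \text{with}\ \|X^{T}\theta\|_{\infty}\le\lambda,
\end{equation*}
which says that some admissible subgradient selection lands in the $\ell_{\infty}$-ball of radius $\lambda$. This is consistent with the KKT system (\ref{eq:5}) read at $\alpha=\textbf{\textit{y}}$, $\beta=0$, and with the strong duality of Theorem \ref{thm:1}.

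The second step is to identify $\|\Delta\|_{\infty}$ with the extremal value of $\|X^{T}\theta\|_{\infty}$ over $\mathcal{F}$ through a coordinatewise optimization. For fixed $j$, the map $\theta\mapsto(X^{T}\theta)_{j}=\sum_{i}\theta_{i}X_{ij}$ is affine and splits into a part fixed by the sign pattern of $\textbf{\textit{y}}$, namely $\zeta_{j}=\tau\sum_{i\in I_{1}}X_{ij}+(\tau-1)\sum_{i\in I_{2}}X_{ij}$ from (\ref{eq:7}), and a free part over the indices $i\notin I_{1}\cup I_{2}$ with $\theta_{i}\in[\tau-1,\tau]$. Maximizing the free part is a box program solved at the corners, taking $\theta_{i}=\tau$ when $X_{ij}\ge0$ and $\theta_{i}=\tau-1$ when $X_{ij}<0$, and minimizing reverses these choices. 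These two corner values are precisely the two entries inside the $\max\{\cdot,\cdot\}$ defining $\Delta_{j}$ in (\ref{eq:new2}), so $\Delta_{j}=\max_{\theta\in\mathcal{F}}|(X^{T}\theta)_{j}|$ and therefore $\lambda_{max}=\|\Delta\|_{\infty}=\max_{\theta\in\mathcal{F}}\|X^{T}\theta\|_{\infty}$.

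With these two steps, the two implications should follow. For (ii), if $\lambda>\lambda_{max}$ then any fixed selection $\theta_{max}\in\mathcal{F}$ obeys $\|X^{T}\theta_{max}\|_{\infty}\le\|\Delta\|_{\infty}<\lambda$, so the existence condition of the first step is met and $\beta^{*}(\lambda)=0$. For (i) I would argue by contraposition: assuming $\lambda<\lambda_{max}$, choose an index $j^{\star}$ attaining $\Delta_{j^{\star}}=\|\Delta\|_{\infty}$ and show that no admissible $\theta\in\mathcal{F}$ can keep $|(X^{T}\theta)_{j^{\star}}|$, and hence $\|X^{T}\theta\|_{\infty}$, at or below $\lambda$, which by the first step forces $\beta^{*}(\lambda)\neq0$.

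The main obstacle I anticipate is exactly the non-differentiability of $\rho_{\tau}$ at the ties $\{i:y_{i}=0\}$, where $\partial\rho_{\tau}(y_{i})$ is set-valued and turns $\mathcal{F}$ into a genuine box rather than a single point. The forward characterization only produces \emph{some} selection with $\|X^{T}\theta\|_{\infty}\le\lambda$, whereas the closed form $\|\Delta\|_{\infty}$ records the \emph{largest} such norm over $\mathcal{F}$; reconciling them in part (i) requires controlling which subgradient the optimal primal--dual pair actually uses on the tie set, i.e.\ showing the relevant dual coordinate sits at the corner of $\mathcal{F}$ selected in (\ref{eq:new2}). The cleanest route, which also covers the generic situation, is to note that when no $y_{i}=0$ the box $\mathcal{F}$ collapses to the single point $\theta_{max}$, so $\|X^{T}\theta_{max}\|_{\infty}=\|\Delta\|_{\infty}$ and both (i) and (ii) are immediate; verifying the corner selection remains compatible with dual optimality in the presence of ties is the delicate point on which I would spend the most care.
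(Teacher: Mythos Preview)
Your framework via Fermat's rule at the origin is essentially the paper's KKT argument read primally: the paper also reduces (i) and (ii) to the box $\mathcal{F}$ and then performs the same coordinatewise corner computation to identify $\Delta_{j}=\max_{\theta\in\mathcal{F}}|(X^{T}\theta)_{j}|$. For (ii) your one-line argument (any fixed $\theta_{max}\in\mathcal{F}$ certifies $0\in\partial F(0)$ once $\lambda>\lambda_{max}$) is cleaner than the paper's, which instead shows that each $\theta_{max}$ is dual-optimal and then invokes strong duality together with a sign-by-sign case analysis of $\rho_{\tau}(y_{i}-x_{i}^{T}\beta^{*}(\lambda))$ to reach the same conclusion.

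The obstacle you flag in (i) is real, and your contraposition plan cannot be completed as stated. Your own first step shows that $\beta^{*}(\lambda)=0$ holds iff \emph{some} $\theta\in\mathcal{F}$ satisfies $\|X^{T}\theta\|_{\infty}\le\lambda$; hence the exact threshold for $\beta^{*}(\lambda)=0$ is $\min_{\theta\in\mathcal{F}}\|X^{T}\theta\|_{\infty}$, not $\max_{\theta\in\mathcal{F}}\|X^{T}\theta\|_{\infty}=\lambda_{max}$. When ties $y_{i}=0$ occur these can differ strictly: with $n=p=1$, $\tau=1/2$, $y_{1}=0$, $X_{11}=1$ one has $\lambda_{max}=1/2$, yet the objective equals $(1/2+\lambda)|\beta|$, so $\beta^{*}(\lambda)=0$ for every $\lambda>0$ and (i) fails. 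No ``corner selection compatible with dual optimality'' can repair this, because the dual optimizer on the tie set is free to sit in the interior of $[\tau-1,\tau]$. The paper does not escape the issue either: its proof of (i) simply asserts ``$\lambda\ge\|X^{T}\theta_{max}\|_{\infty}$ with any $\theta_{max}\in\mathcal{F}$'' from the KKT system, but KKT only yields that inequality for the particular dual optimizer $\theta^{*}(\lambda)\in\mathcal{F}$, not for every element of $\mathcal{F}$. Your instinct to fall back on the generic case where all $y_{i}\neq 0$ (so $\mathcal{F}$ is a singleton, the min and max coincide, and Remark~3.1 already gives $\lambda_{max}=\|X^{T}\theta_{max}\|_{\infty}$) is the correct scope for a valid version of (i).
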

\begin{proof}
(i) Let $\beta^{*}(\lambda)=0$. By Theorem \ref{thm:1} and the KKT system (\ref{eq:5}), we know that $\textbf{\textit{y}}=\alpha$, $\theta^{*}(\lambda)\in\mathcal{F}$ and $\lambda\geq\|X^{T}\theta_{max}\|_{\infty}$ with any $\theta_{max}\in\mathcal{F}$ defined in (\ref{eq:new}).
Because $\mathcal{F}$ is a set, $\lambda$ needs to satisfy that $\lambda\geq\underset{\theta_{max}\in\mathcal{F}}\max\|X^{T}\theta_{max}\|_{\infty}$.
By the definition of the $\|\cdot\|_{\infty}$, we know that
\begin{align*}
\underset{\theta_{max}\in\mathcal{F}}\max\|X^{T}\theta_{max}\|_{\infty}&=\underset{\theta_{max}\in\mathcal{F}}\max~\underset{j}\max~|X_{.j}^{T}\theta_{max}|\\
&=\underset{j}\max~\underset{\theta_{max}\in\mathcal{F}}\max~|X_{.j}^{T}\theta_{max}|.
\end{align*}
Based on (\ref{eq:new}) and (\ref{eq:7}), we know that
\begin{align*}
&\underset{\theta_{max}\in\mathcal{F}}\max~|X_{.j}^{T}\theta_{max}|\\
&=\underset{\tau-1\leq\kappa_{i}\leq\tau}\max~\Big|\underset{\zeta_{j}}{\underbrace{\tau\sum\limits_{i\in I_{1}}X_{ij}+(\tau-1)\sum\limits_{i\in I_{2}}X_{ij}}}+\sum\limits_{i\notin I_{1}\cup I_{2}}\kappa_{i}X_{ij}\Big|\\
&=\underset{\tau-1\leq\kappa_{i}\leq\tau}\max~|\zeta_{j}+\sum\limits_{i\notin I_{1}\cup I_{2}}\kappa_{i}X_{ij}|\\
&=\max~\left\{\zeta_{j}+\sum\limits_{i\notin I_{1}\cup I_{2}}\underset{\tau-1\leq\kappa_{i}\leq\tau}\max\kappa_{i}X_{ij},
-\zeta_{j}+\sum\limits_{i\notin I_{1}\cup I_{2}}\underset{\tau-1\leq\kappa_{i}\leq\tau}\max-\kappa_{i}X_{ij}\right\}\\
&=\max~\begin{Bmatrix}
\zeta_{j}+\sum\limits_{i\notin I_{1}\cup I_{2},X_{ij}\geq0}\tau X_{ij}+\sum\limits_{i\notin I_{1}\cup I_{2},X_{ij}<0}(\tau-1)X_{ij},\\ -\zeta_{j}-\sum\limits_{i\notin I_{1}\cup I_{2},X_{ij}\geq0}(\tau-1)X_{ij}-\sum\limits_{i\notin I_{1}\cup I_{2},X_{ij}<0}\tau X_{ij}
\end{Bmatrix}\\
&=\Delta_{j}.
\end{align*}
Therefore,
\begin{align*}
\lambda\geq\underset{\mathcal{F}}\max~\|X^{T}\theta_{max}\|_{\infty}=\underset{j}\max~\Delta_{j}=\|\Delta\|_{\infty}=\lambda_{max}.
\end{align*}

(ii) Let  $\lambda>\lambda_{max}$. Denote $f_{\lambda}(\beta^{*}(\lambda))$ as the optimal value of the problem (\ref{eq:2}), $g_{\lambda}(\theta^{*}(\lambda))=\langle\theta^{*}(\lambda),\textbf{\textit{y}}\rangle$ as the objective function  of the problem (\ref{eq:4}) and
\begin{align*}
g&=\underset{\theta}\max\left\{\langle\theta,\textbf{\textit{y}}\rangle\Big|\tau-1\leq\theta_{i}\leq\tau, \quad i=1,\cdots,n\right\}\\
&=\tau\sum\limits_{i\in I_{1}}y_{i}+(\tau-1)\sum\limits_{i\in I_{2}}y_{i}.
\end{align*}
It is clear that $g\geq g_{\lambda}(\theta^{*}(\lambda))$, because there are more constraints of $\theta$ when calculating the value of $g_{\lambda}(\theta^{*}(\lambda))$. Based on $\lambda>\lambda_{max}$, we know that any $\theta_{max}\in\mathcal{F}$ is a feasible point of the problem (\ref{eq:4}) and $g_{\lambda}(\theta^{*}(\lambda))\geq \langle\theta_{max},\textbf{y}\rangle=g$. Therefore, for any $\lambda>\lambda_{max}$, $g= g_{\lambda}(\theta^{*}(\lambda))$ and any $\theta_{max}\in\mathcal{F}$ is a solution of the problem (\ref{eq:4}). Replacing $\theta_{max}$ into the KKT system (\ref{eq:5}), we obtain the results (a), (b) and (c).

(a) If $y_{i}>0$, we have $\theta^{(i)}_{max}=\{\tau\}$, which leads to $\alpha^{*}_{i}(\lambda)\geq0$ and $y_{i}-x^{T}_{i}\beta^{*}(\lambda)\geq0$. Therefore, $$\rho_{\tau}(y_{i}-x^{T}_{i}\beta^{*}(\lambda))=\tau(y_{i}-x^{T}_{i}\beta^{*}(\lambda)).$$

(b) If $y_{i}=0$, we have $\theta^{(i)}_{max}=\left[\tau-1,\tau\right]$, which leads to $\alpha^{*}_{i}(\lambda)=0$ and $y_{i}-x^{T}_{i}\beta^{*}(\lambda)=0$. Therefore, $$\rho_{\tau}(y_{i}-x^{T}_{i}\beta^{*}(\lambda))=0.$$

(c) If $y_{i}<0$, we have $\theta^{(i)}_{max}=\{\tau-1\}$, which leads to $\alpha^{*}_{i}(\lambda)\leq0$ and $y_{i}-x^{T}_{i}\beta^{*}(\lambda)\leq0$. Therefore, $$\rho_{\tau}(y_{i}-x^{T}_{i}\beta^{*}(\lambda))=(\tau-1)(y_{i}-x^{T}_{i}\beta^{*}(\lambda)).$$
According to these three results, we have
$$f_{\lambda}(\beta^{*}(\lambda))=\tau\sum\limits_{i\in I_{1}}(y_{i}-x^{T}_{i}\beta^{*}(\lambda))+(\tau-1)\sum\limits_{i\in I_{2}}(y_{i}-x^{T}_{i}\beta^{*}(\lambda)).$$
Based on the strong duality theorem, we know that
$$f_{\lambda}(\beta^{*}(\lambda))=g_{\lambda}(\theta_{max}),$$
which means
\begin{align*}
\tau\sum\limits_{i\in I_{1}}(y_{i}-x^{T}_{i}\beta^{*}(\lambda))+(\tau-1)\sum\limits_{i\in I_{2}}(y_{i}-x^{T}_{i}\beta^{*}(\lambda))\\
=\tau\sum\limits_{i\in I_{1}}y_{i}+(\tau-1)\sum\limits_{i\in I_{2}}y_{i}.
\end{align*}
Clearly, $\beta^{*}(\lambda)=0$ is a solution of (\ref{eq:2}) when $\lambda>\lambda_{max}$. Moreover, based on $X^{T}\theta_{max}\in \lambda\partial\|\beta\|_{1}$ and $\lambda>\lambda_{max}$, $\beta^{*}(\lambda)=0$ is the unique solution.
\end{proof}
\begin{Remark}
In Lemma 3.1, the expression of $\lambda_{max}$ seems complex, while it can be determined by the given data $X$ and $\textbf{\textit{y}}$. The complex expression of $\lambda_{max}$ is due to that $\mathcal{F}$ is a set. In the special case that all elements of $\textbf{\textit{y}}$ are not zero, $\mathcal{F}=\{\theta_{\max}\}$ is a singleton and $\lambda_{max}=\|X^{T}\theta_{max}\|_{\infty}$.
\end{Remark}
According to this result, we consider about $\lambda\in(0,\lambda_{max}]$ in the rest of this paper. Next, we estimate the dual solution when $\lambda\in(0,\lambda_{max}]$, via relaxing the dual box constraints.
\begin{Lemma}
For any $\lambda\in(0,\lambda_{max}]$, the solution of the problem (\ref{eq:4}) $\theta^{*}(\lambda)$ satisfies that
\begin{eqnarray}
\theta^{*}(\lambda)\in \Theta=\left\{\theta\Big|\langle\theta,\textbf{\textit{y}}\rangle\leq b_{1},\langle\theta,\textbf{\textit{y}}\rangle\geq b_{2},\|\theta\|_{2}\leq\rho\right\},
\end{eqnarray}
where $\rho=\frac{\sqrt{n}}{2},$
$$b_{1}=\tau\sum\limits_{i\in I_{1}}y_{i}+(\tau-1)\sum\limits_{i\in I_{2}}y_{i}-(\tau-\frac{1}{2})\sum\limits_{k=1}^{n}y_{k}$$
and
$$b_{2}=\frac{\lambda}{\lambda_{max}}\left(\tau\sum\limits_{i\in I_{1}}y_{i}+(\tau-1)\sum\limits_{i\in I_{2}}y_{i}\right)-(\tau-\frac{1}{2})\sum\limits_{k=1}^{n}y_{k}.$$
\end{Lemma}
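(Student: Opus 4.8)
The plan is to verify, one at a time, the three inequalities that cut out $\Theta$, using only that $\theta^{*}(\lambda)$ is feasible and optimal for the dual problem (\ref{eq:4}), together with Theorem \ref{thm:1} and Lemma \ref{lem:1}. The guiding device is to recenter the dual vector by subtracting $(\tau-\frac{1}{2})e$, where $e$ is the all-ones vector: this is precisely the shift that produces the term $-(\tau-\frac{1}{2})\sum_{k=1}^{n}y_{k}$ occurring in both $b_{1}$ and $b_{2}$, and it is what turns the feasible box $[\tau-1,\tau]^{n}$ into a cube centered at the origin. All three constraints defining $\Theta$ are then read off from this recentered vector $\theta^{*}(\lambda)-(\tau-\frac{1}{2})e$.

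For the sphere constraint $\|\theta\|_{2}\leq\rho$, I would note that dual feasibility forces $\tau-1\leq\theta^{*}_{i}(\lambda)\leq\tau$ for each $i$, so each recentered coordinate lies in $[-\frac{1}{2},\frac{1}{2}]$ and the recentered vector lies in the unit cube $[-\frac{1}{2},\frac{1}{2}]^{n}$. The smallest ball containing this cube, i.e.\ its circumscribed sphere, is centered at the origin with radius equal to half the main diagonal, namely $\frac{\sqrt{n}}{2}=\rho$; this is exactly the dual circumscribed sphere relaxation. For the upper bound $b_{1}$, I would observe that deleting the constraint $\|X^{T}\theta\|_{\infty}\leq\lambda$ from (\ref{eq:4}) only enlarges the feasible region, so $\langle\theta^{*}(\lambda),\textbf{\textit{y}}\rangle\leq g$, where $g=\tau\sum_{i\in I_{1}}y_{i}+(\tau-1)\sum_{i\in I_{2}}y_{i}$ is the box-only maximum already computed in the proof of Lemma \ref{lem:1} (attained by $\theta_{i}=\tau$ when $y_{i}>0$ and $\theta_{i}=\tau-1$ when $y_{i}<0$). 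Subtracting the shift $(\tau-\frac{1}{2})\sum_{k}y_{k}$ converts $\langle\theta^{*}(\lambda),\textbf{\textit{y}}\rangle\leq g$ into the recentered inequality $\langle\theta^{*}(\lambda),\textbf{\textit{y}}\rangle\leq b_{1}$.

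The lower bound $b_{2}$ is where the ratio $\lambda/\lambda_{max}$ enters, and I would obtain it by exhibiting an explicit feasible point and invoking optimality. Choose $\theta_{max}\in\mathcal{F}$ with $\langle\theta_{max},\textbf{\textit{y}}\rangle=g$ and consider the scaled point $\frac{\lambda}{\lambda_{max}}\theta_{max}$. Since $\tau-1<0<\tau$, the box $[\tau-1,\tau]$ straddles the origin, so scaling each coordinate by the factor $\frac{\lambda}{\lambda_{max}}\in(0,1]$ keeps it inside the box; at the same time $\|X^{T}(\frac{\lambda}{\lambda_{max}}\theta_{max})\|_{\infty}=\frac{\lambda}{\lambda_{max}}\|X^{T}\theta_{max}\|_{\infty}\leq\frac{\lambda}{\lambda_{max}}\lambda_{max}=\lambda$, using $\|X^{T}\theta_{max}\|_{\infty}\leq\lambda_{max}$ from the characterization of $\lambda_{max}$ in Lemma \ref{lem:1}. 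Hence $\frac{\lambda}{\lambda_{max}}\theta_{max}$ is feasible for (\ref{eq:4}), and optimality of $\theta^{*}(\lambda)$ gives $\langle\theta^{*}(\lambda),\textbf{\textit{y}}\rangle\geq\frac{\lambda}{\lambda_{max}}\langle\theta_{max},\textbf{\textit{y}}\rangle=\frac{\lambda}{\lambda_{max}}g$; the same shift then yields $\langle\theta^{*}(\lambda),\textbf{\textit{y}}\rangle\geq b_{2}$.

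I expect the third step to be the main obstacle: one must construct a feasible point whose objective is controlled from below, and the nontrivial insight is that contracting a box-optimal $\theta_{max}$ toward the origin by the exact ratio $\lambda/\lambda_{max}$ simultaneously respects the box constraint (because the box contains $0$) and the $\ell_{\infty}$ constraint (which scales linearly). The crucial consistency check is verifying $\|X^{T}\theta_{max}\|_{\infty}\leq\lambda_{max}$ for the chosen $\theta_{max}$, so that the contracted point genuinely satisfies $\|X^{T}\cdot\|_{\infty}\leq\lambda$; this step rests directly on the definition of $\lambda_{max}$ as $\max_{\theta_{max}\in\mathcal{F}}\|X^{T}\theta_{max}\|_{\infty}$ established in Lemma \ref{lem:1}. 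By contrast, the sphere bound and the upper bound are routine, following respectively from the circumscribed-sphere geometry of the cube and from relaxing the $\ell_{\infty}$ constraint.
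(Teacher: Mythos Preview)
Your proposal is correct and follows essentially the same route as the paper: both arguments bound $\langle\theta^{*}(\lambda),\textbf{\textit{y}}\rangle$ above by relaxing away the $\ell_{\infty}$ constraint, bound it below by checking that the scaled point $\frac{\lambda}{\lambda_{max}}\theta_{max}$ is dual-feasible, and then apply the shift $\tilde{\theta}=\theta-(\tau-\frac{1}{2})e$ followed by the circumscribed-sphere relaxation of the recentered box. If anything, you are slightly more explicit than the paper in verifying feasibility of the scaled point (the paper simply asserts ``it is sure that $\hat{\theta}(\lambda)$ satisfies these constraints''), and your identification of the lower bound as the only nontrivial step is accurate.
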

\begin{proof}
For any tuning parameter $\lambda\in(0,\lambda_{max}]$, the dual solution $\theta^{*}(\lambda)$ satisfies that
\begin{center}
$\|X^{T}\theta^{*}(\lambda)\|_{\infty}\leq\lambda$ and $\tau-1\leq\theta^{*}_{i}(\lambda)\leq\tau, i=1,\cdots,n$.
\end{center}
It is sure that the $\hat{\theta}(\lambda)=\frac{\lambda}{\lambda_{max}}\theta_{max}$ satisfies these constraints for any $\theta_{max}\in\mathcal{F}$, so the dual solution must satisfy that
$$\langle\theta^{*}(\lambda),\textbf{\textit{y}}\rangle\geq g_{\lambda}(\hat{\theta}(\lambda))= \frac{\lambda}{\lambda_{max}}\left(\tau\sum\limits_{i\in I_{1}}y_{i}+(\tau-1)\sum\limits_{i\in I_{2}}y_{i}\right).$$ In addition,
$$\langle\theta^{*}(\lambda),\textbf{\textit{y}}\rangle\leq g=\tau\sum\limits_{i\in I_{1}}y_{i}+(\tau-1)\sum\limits_{i\in I_{2}}y_{i},$$
where $I_{1}$ and $I_{2}$ are defined in (\ref{eq:7}). This means
\begin{eqnarray}
\theta^{*}(\lambda)\in \left\{\theta \Big| \langle\theta,\textbf{y}\rangle\leq g, \langle\theta,\textbf{y}\rangle\geq g_{\lambda}(\hat{\theta}(\lambda)), \{\tau-1\leq\theta_{i}\leq\tau\}_{i=1}^{n}\right\}.
\end{eqnarray}
To simplify the computation of $|X_{\cdot j}^{T}\theta^{*}(\lambda)|$ in the screen test, let
\begin{eqnarray}
\mathcal{\Theta}_{1}=\left\{\theta \Big| \langle\theta,\textbf{y}\rangle\leq g,\langle\theta,\textbf{y}\rangle\geq g_{\lambda}(\hat{\theta}(\lambda)), \{\tau-1\leq\theta_{i}\leq\tau\}_{i=1}^{n}\right\}.
\end{eqnarray}
The dual solution $\theta^{*}(\lambda)$ must be in this set.

Introducing
$\tilde{\theta}\in \mathbb{R}^{n}$ with $\tilde{\theta}_{i}=\theta_{i}-(\tau-\frac{1}{2})$,
then $\mathcal{\Theta}_{1}$ can be expressed as
\begin{eqnarray}
\mathcal{\Theta}_{2}=\left\{\tilde{\theta}\Big| \langle\tilde{\theta},\textbf{y}\rangle\leq b_{1}, \langle\tilde{\theta},\textbf{y}\rangle\geq b_{2}, |\tilde{\theta}_{i}|\leq\frac{1}{2}, i=1,\cdots,n\right\},
\end{eqnarray}
where $b_{1}=g-(\tau-\frac{1}{2})\sum\limits_{i=1}^{n}y_{i}$ and $b_{2}=g_{\lambda}(\hat{\theta})-(\tau-\frac{1}{2})\sum\limits_{i=1}^{n}y_{i}$.
To get an easily calculated result of $\underset{\theta}\max|X_{.j}^{T}\theta|$, we relax $\mathcal{\Theta}_{2}$ as
\begin{eqnarray*}
\Theta=\left\{\tilde{\theta}\Big|\langle\tilde{\theta},\textbf{\textit{y}}\rangle\leq b_{1},\langle\tilde{\theta},\textbf{\textit{y}}\rangle\geq b_{2},\|\tilde{\theta}\|_{2}\leq\rho\right\},
\end{eqnarray*}
where $\rho=\frac{\sqrt{n}}{2}$ and $\left\{\tilde{\theta}\Big|||\tilde{\theta}||_{2}\leq\rho\right\}$ is a circumscribed sphere of $\left\{\tilde{\theta}\Big||\tilde{\theta}_{i}|\leq\frac{1}{2}, i=1,\cdots,n\right\}$.
\end{proof}

\begin{Remark}
In Lemma 3.2,  we estimate the dual solution by employing the circumscribed sphere relaxation of the dual box constraints. Therefore, we  name it as the dual circumscribed sphere technique.
\end{Remark}
From the proof of Lemma 3.2, we get that
\begin{align*}
X_{\cdot j}^{T}\theta^{*}(\lambda)&\leq\underset{\theta\in\Theta_{1}}\max {X_{\cdot j}^{T}\theta}=\underset{\theta\in\Theta_{2}}\max {X_{\cdot j}^{T}\theta}+(\tau-\frac{1}{2})\sum\limits_{i=1}^{n}X_{ij}\\&\leq
P^{+}_{j}(\lambda):=\underset{\theta\in\Theta}\max {X_{\cdot j}^{T}\theta}+(\tau-\frac{1}{2})\sum\limits_{i=1}^{n}X_{ij}
\end{align*}
and
\begin{align*}
-X_{\cdot j}^{T}\theta^{*}(\lambda)&\leq\underset{\theta\in\Theta_{1}}\max {-X_{\cdot j}^{T}\theta}=\underset{\theta\in\Theta_{2}}\max {-X_{\cdot j}^{T}\theta}+(\tau-\frac{1}{2})\sum\limits_{i=1}^{n}X_{ij}
\\ &\leq P^{-}_{j}(\lambda):=\underset{\theta\in\Theta}\max {-X_{\cdot j}^{T}\theta}+(\tau-\frac{1}{2})\sum\limits_{i=1}^{n}X_{ij}.
\end{align*}
Combining these results, we obtain that
$$|X_{\cdot j}^{T}\theta^{*}(\lambda)|\leq\max\{P^{+}_{j}(\lambda),P^{-}_{j}(\lambda)\}.$$

Next, we present the detailed results of $P^{+}_{j}(\lambda)$. Before that, we introduce some new notations.

Denote $z^{*}=\underset{\theta\in\Theta}\max {X_{\cdot j}^{T}\theta}$ and $d$ as the distance between the origin of the coordinates and the hyperplane $\left\{\theta\Big|X_{\cdot j}^{T}\theta=z^{*}\right\}$. Then
$$z^{*}=d\|X_{\cdot j}\|_{2}.$$
Denote $\gamma_{j}$ as the angle between $X_{\cdot j}$ and $\textbf{\textit{y}}$, it is easy to see that $\gamma_{j}\in [0,\pi]$ and
\begin{center}
$\rm{cos}$$\gamma_{j}=\frac{\langle X_{\cdot j},\textbf{y}\rangle}{\|X_{\cdot j}\|_{2}\|\textbf{\textit{y}}\|_{2}}$.
\end{center}
Denote $t_{1}$ as the distance between the origin of the coordinates and $\left\{\theta\Big|\langle \theta,\textbf{\textit{y}}\rangle=b_{1}\right\}$. Then $t_{1}=\frac{|b_{1}|}{\|\textbf{\textit{y}}\|_{2}}$. Denote $t_{2}$ as the distance between the origin of the coordinates and  $\left\{\theta\Big|\langle \theta,\textbf{\textit{y}}\rangle=b_{2}\right\}$. Then $t_{2}=\frac{|b_{2}|}{\|\textbf{\textit{y}}\|_{2}}$.
\begin{Lemma}
For any $\lambda\in(0,\lambda_{max}]$,
$$P^{+}_{j}(\lambda)=d\|X_{\cdot j}\|_{2}+(\tau-\frac{1}{2})\sum\limits_{i=1}^{n}X_{ij}$$ with $d$ defined as follows.
\begin{equation}\label{eq:11}
d=
\begin{cases}
\rho,
&\\ \quad \quad \quad\quad \quad\quad\rm{if}~ sign(b_{2})\frac{t_{2}}{\rho}\leq \rm{cos}\gamma_{j}\leq\frac{t_{1}}{\rho}(b_{1}\geq0)
 \\ \quad \quad\quad\quad\quad\quad\quad ~~\rm{or} -\frac{t_{1}}{\rho}\leq \rm{cos}\gamma_{j}\leq-\frac{t_{2}}{\rho}(b_{1}<0)\\
\frac{t_{1}}{\rm{cos}\gamma_{j}}+\sqrt{\rho^{2}-t_{1}^{2}}\cdot\rm{sin}\gamma_{j}-t_{1}\cdot\rm{tan}\gamma_{j}\cdot\rm{sin}\gamma_{j},
& \\ \quad \quad\quad\quad \quad\quad \rm{if}~ \rm{cos}\gamma_{j}>\frac{t_{1}}{\rho} (b_{1}\geq0)\\
\sqrt{\rho^{2}-t_{2}^{2}}\cdot\rm{sin}\gamma_{j}+t_{2}\cdot\rm{cos}\gamma_{j},
&\\ \quad \quad \quad\quad\quad\quad \rm{if}~  \rm{cos}\gamma_{j}<\frac{t_{2}}{\rho}(b_{1}\geq0, b_{2}\geq0)\\
-\frac{t_{2}}{\rm{cos}\gamma_{j}}+\sqrt{\rho^{2}-t_{2}^{2}}\cdot\rm{sin}\gamma_{j}+t_{2}\cdot\rm{tan}\gamma_{j}\cdot\rm{sin}\gamma_{j},
&\\ \quad \quad \quad\quad\quad\quad \rm{if}~ \rm{cos}\gamma_{j}<-\frac{t_{2}}{\rho}(b_{1}\geq0), \rm{cos}\gamma_{j}<-\frac{t_{1}}{\rho}(b_{1}<0)\\
\sqrt{\rho^{2}-t_{1}^{2}}\cdot\rm{sin}\gamma_{j}-t_{1}\cdot\rm{cos}\gamma_{j},
&\\ \quad \quad \quad\quad\quad\quad \rm{if}~ \rm{cos}\gamma_{j}>\frac{t_{1}}{\rho}(b_{1}<0, b_{2}<0).
\end{cases}
\end{equation}
\end{Lemma}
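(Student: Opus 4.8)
The plan is to recognize that, although $\theta$ lives in $\mathbb{R}^{n}$, maximizing the linear functional $X_{\cdot j}^{T}\theta$ over $\Theta$ is an intrinsically two-dimensional problem governed only by the directions $X_{\cdot j}$ and $\textbf{\textit{y}}$. First I would decompose any $\theta=\theta_{\parallel}+\theta_{\perp}$, where $\theta_{\parallel}$ lies in the plane $P=\mathrm{span}\{X_{\cdot j},\textbf{\textit{y}}\}$ and $\theta_{\perp}\perp P$. Since $X_{\cdot j},\textbf{\textit{y}}\in P$, both the objective $X_{\cdot j}^{T}\theta=X_{\cdot j}^{T}\theta_{\parallel}$ and the slab constraints $b_{2}\le\langle\theta,\textbf{\textit{y}}\rangle=\langle\theta_{\parallel},\textbf{\textit{y}}\rangle\le b_{1}$ ignore $\theta_{\perp}$, whereas $\|\theta\|_{2}^{2}=\|\theta_{\parallel}\|_{2}^{2}+\|\theta_{\perp}\|_{2}^{2}$ only grows with $\theta_{\perp}$. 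Hence deleting $\theta_{\perp}$ preserves feasibility and the objective, so a maximizer may be taken in $P$, and $d=z^{*}/\|X_{\cdot j}\|_{2}$ equals the maximum of the normalized objective over the resulting planar region (which is nonempty because $\theta^{*}(\lambda)\in\Theta$).

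Second, I would introduce the orthonormal frame $u=\textbf{\textit{y}}/\|\textbf{\textit{y}}\|_{2}$ and $v\in P$ with $v\perp u$, oriented so that the $v$-component of $X_{\cdot j}$ is nonnegative. Writing $\theta=p\,u+q\,v$ gives $\langle\theta,\textbf{\textit{y}}\rangle=p\|\textbf{\textit{y}}\|_{2}$ and $X_{\cdot j}^{T}\theta=\|X_{\cdot j}\|_{2}(p\cos\gamma_{j}+q\sin\gamma_{j})$, where $\sin\gamma_{j}\ge0$ since $\gamma_{j}\in[0,\pi]$. The feasible set becomes $\{(p,q):p_{\min}\le p\le p_{\max},\,p^{2}+q^{2}\le\rho^{2}\}$ with $p_{\max}=b_{1}/\|\textbf{\textit{y}}\|_{2}$ and $p_{\min}=b_{2}/\|\textbf{\textit{y}}\|_{2}$ (note $b_{1}\ge b_{2}$). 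For fixed $p$ the objective increases in $q$, so I fix $q=\sqrt{\rho^{2}-p^{2}}$ and reduce to maximizing the single-variable function $f(p)=p\cos\gamma_{j}+\sin\gamma_{j}\sqrt{\rho^{2}-p^{2}}$ over $[p_{\min},p_{\max}]$, giving $d=\max f$.

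Third, I would exploit that $f$ is \emph{concave}, since a direct computation gives $f''(p)=-\rho^{2}\sin\gamma_{j}(\rho^{2}-p^{2})^{-3/2}\le0$, with unconstrained stationary point $p^{\circ}=\rho\cos\gamma_{j}$ and $f(p^{\circ})=\rho$. Concavity pins the constrained maximizer to exactly one of three locations: the interior point $p^{\circ}$ when $p^{\circ}\in[p_{\min},p_{\max}]$, the right endpoint $p_{\max}$ when $p^{\circ}>p_{\max}$, and the left endpoint $p_{\min}$ when $p^{\circ}<p_{\min}$. Translating $p_{\max},p_{\min}$ into $\pm t_{1},\pm t_{2}$ according to the signs of $b_{1},b_{2}$ (with $t_{1}=|b_{1}|/\|\textbf{\textit{y}}\|_{2}$, $t_{2}=|b_{2}|/\|\textbf{\textit{y}}\|_{2}$) turns the trichotomy into the stated inequalities on $\cos\gamma_{j}$, and evaluating $f$ yields $d=\rho$ in the interior case and $d=p^{*}\cos\gamma_{j}+\sin\gamma_{j}\sqrt{\rho^{2}-(p^{*})^{2}}$ at a boundary point $p^{*}\in\{\pm t_{1},\pm t_{2}\}$. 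The superficially different boundary expressions in (\ref{eq:11}) collapse to this form via the identity $\tfrac{t}{\cos\gamma_{j}}-t\tan\gamma_{j}\sin\gamma_{j}=t\cos\gamma_{j}$; as a by-product, each endpoint condition (for instance $\cos\gamma_{j}>t_{1}/\rho$) forces the corresponding bound $t_{1}<\rho$, so the radicands are automatically nonnegative. Adding back $(\tau-\tfrac12)\sum_{i}X_{ij}$ recovers $P_{j}^{+}(\lambda)=d\|X_{\cdot j}\|_{2}+(\tau-\tfrac12)\sum_{i}X_{ij}$.

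I expect the genuine conceptual steps — the two-dimensional reduction and the concavity trichotomy — to be short. The main obstacle is the disciplined sign bookkeeping in the last step, where the six branches of (\ref{eq:11}) arise from combining the three positions of $p^{\circ}$ with the admissible signs of $b_{1}$ and $b_{2}$; one must verify these branches are exhaustive and mutually consistent, including the degenerate case $\gamma_{j}\in\{0,\pi\}$ in which $P$ collapses to a line and $\sin\gamma_{j}=0$ makes $f$ affine. The analysis for $P_{j}^{-}(\lambda)$ is identical after replacing $X_{\cdot j}$ by $-X_{\cdot j}$.
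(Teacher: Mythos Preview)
Your approach is correct and takes a somewhat different route from the paper's. The paper argues purely geometrically: it asserts that the optimal supporting hyperplane $\{X_{\cdot j}^{T}\theta=z^{*}\}$ is either tangent to the ball $\{\|\theta\|_{2}\le\rho\}$ or passes through one of the two intersection sets $\{\|\theta\|_{2}=\rho\}\cap\{\langle\theta,\textbf{\textit{y}}\rangle=b_{k}\}$, and then reads off $d$ from planar pictures (three figures, one per sign pattern of $(b_{1},b_{2})$), deriving each expression by trigonometry on the picture. The two-dimensional reduction is implicit in the figures but never stated or justified. Your argument makes that reduction explicit via the orthogonal decomposition $\theta=\theta_{\parallel}+\theta_{\perp}$, and then replaces the pictures with a one-variable concave maximization whose trichotomy (interior stationary point versus either endpoint of $[p_{\min},p_{\max}]$) reproduces the case split analytically. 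This is more self-contained --- it needs no figures and no appeal to geometric intuition in $\mathbb{R}^{n}$ --- and it also exposes the simplification you note, that every boundary branch of (\ref{eq:11}) is an instance of $f(p^{*})=p^{*}\cos\gamma_{j}+\sin\gamma_{j}\sqrt{\rho^{2}-(p^{*})^{2}}$ via $\tfrac{t}{\cos\gamma_{j}}-t\tan\gamma_{j}\sin\gamma_{j}=t\cos\gamma_{j}$, a collapse the paper does not carry out. The paper's version is perhaps more visual; yours is cleaner to verify and handles the degenerate case $\sin\gamma_{j}=0$ without separate treatment.
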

\begin{proof}
There are three cases of the relationship between the hyperplane $\left\{\theta\Big|X_{\cdot j}^{T}\theta=z^{*}\right\}$ and the sphere $\left\{\theta\Big|||\theta||_{2}\leq\rho\right\}$:

(i) $\left\{\theta\Big|X_{\cdot j}^{T}\theta=z^{*}\right\}$ is tangent to $\left\{\theta\Big|\|\theta\|_{2}\leq\rho\right\}$;

(ii) $\left\{\theta\Big|X_{\cdot j}^{T}\theta=z^{*}\right\}$ meets the intersection between $\left\{\theta\Big|\|\theta\|_{2}\leq\rho\right\}$ and $\left\{\theta\Big|X_{\cdot j}^{T}\theta=b_{1}\right\}$;

(iii) $\left\{\theta\Big|X_{\cdot j}^{T}\theta=z^{*}\right\}$ meets the intersection  between $\left\{\theta\Big|\|\theta\|_{2}\leq\rho\right\}$ and $\left\{\theta\Big|X_{\cdot j}^{T}\theta=b_{2}\right\}$.

From Lemma 3.2, we know that $b_{1}\geq b_{2}$. According to the different values of $b_{1}$ and $b_{2}$, the value of $d$ is presented as follows.

Cases 1: $b_{1}\geq0$ and $b_{2}\geq0$. See Fig. 1 for easy understanding.
\begin{figure}[htbp]
\centering {\includegraphics[width=3.5in]{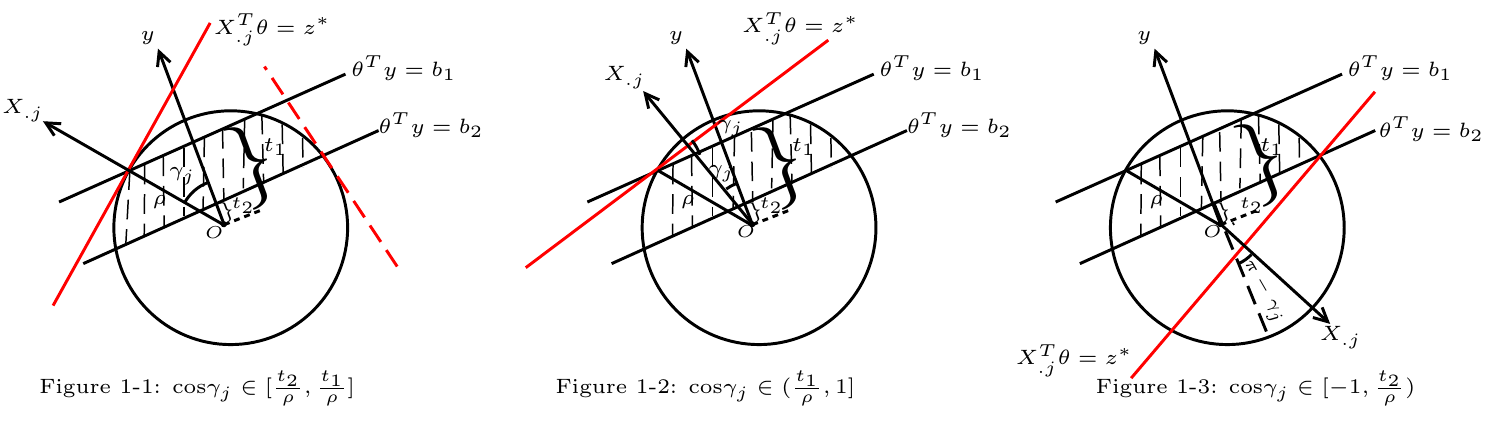}}
\caption{Cases 1: $b_{1}\geq0$ and $b_{2}\geq0$}
\end{figure}

Subcases 1-1: $X_{\cdot j}^{T}\theta=z^{*}$ is tangent to the sphere $\left\{\theta\Big|||\theta||_{2}\leq\rho\right\}$. In this case, the distance between the origin of the coordinates and the hyperplane $\left\{\theta\Big|X_{\cdot j}^{T}\theta=z^{*}\right\}$ is $d=\rho$. The Figure 1-1 in Fig. 1 shows that we use this case if $\frac{t_{2}}{\rho}\leq \rm{cos}$$\gamma_{j}\leq\frac{t_{1}}{\rho}$.

 Subcases 1-2: $X_{\cdot j}^{T}\theta=z^{*}$ meets the intersection between $\left\{\theta\Big|||\theta||_{2}\leq\rho\right\}$ and  $X_{\cdot j}^{T}\theta=b_{1}$. Here,
\begin{center}
$d=\frac{t_{1}}{\rm{cos}\gamma_{j}}+\sqrt{\rho^{2}-t_{1}^{2}}\cdot\rm{sin}\gamma_{j}-t_{1}\cdot\rm{tan}\gamma_{j}\cdot\rm{sin}\gamma_{j}$.
\end{center}
The result of Figure 1-2 in Fig. 1 shows that $\rm{cos}$$\gamma_{j}>\frac{t_{1}}{\rho}$.

Subcases 1-3: $X_{\cdot j}^{T}\theta=z^{*}$ meets the intersection between $\left\{\theta\Big|||\theta||_{2}\leq\rho\right\}$ and  $X_{\cdot j}^{T}\theta=b_{2}$. Similarly,
\begin{center}
$d=\sqrt{\rho^{2}-t_{2}^{2}}\cdot\rm{sin}\gamma_{j}+t_{2}\cdot\rm{cos}\gamma_{j}$.
\end{center}
From Figure 1-3 in Fig. 1, we know that $\rm{cos}$$\gamma_{j}<\frac{t_{2}}{\rho}$ in this case.

With the similar analysis, we can obtain the other cases as follows.\\
Cases 2: $b_{1}\geq0$ and $b_{2}<0$. See Fig. 2 for easy understanding.
\begin{figure}[htbp]
\centering {\includegraphics[width=3.5in]{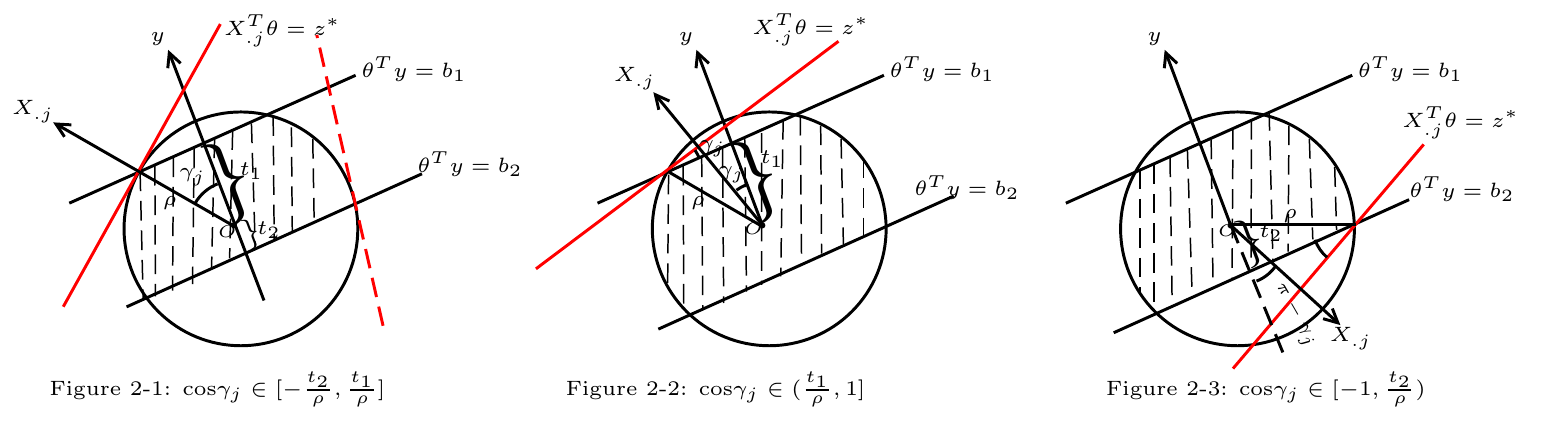}}
\caption{Cases 2: $b_{1}\geq0$ and $b_{2}<0$}
\end{figure}

Subcases 2-1: $d=\rho$ when $-\frac{t_{2}}{\rho}\leq \rm{cos}$$\gamma_{j}\leq\frac{t_{1}}{\rho}$.

Subcases 2-2: If $\rm{cos}$$\gamma_{j}>\frac{t_{1}}{\rho}$,
\begin{center}
$d=\frac{t_{1}}{\rm{cos}\gamma_{j}}+\sqrt{\rho^{2}-t_{1}^{2}}\cdot\rm{sin}\gamma_{j}-t_{1}\cdot\rm{tan}\gamma_{j}\cdot\rm{sin}\gamma_{j}$.
\end{center}

Subcases 2-3: If $\rm{cos}$$\gamma_{j}<-\frac{t_{2}}{\rho}$
\begin{center}
$d=-\frac{t_{2}}{\rm{cos}\gamma_{j}}+\sqrt{\rho^{2}-t_{2}^{2}}\cdot\rm{sin}\gamma_{j}+t_{2}\cdot\rm{tan}\gamma_{j}\cdot\rm{sin}\gamma_{j}$.
\end{center}
Cases 3: $b_{1}<0$ and $b_{2}<0$. See Fig. 3 for easy understanding.

\begin{figure}[htbp]
\centering {\includegraphics[width=3.5in]{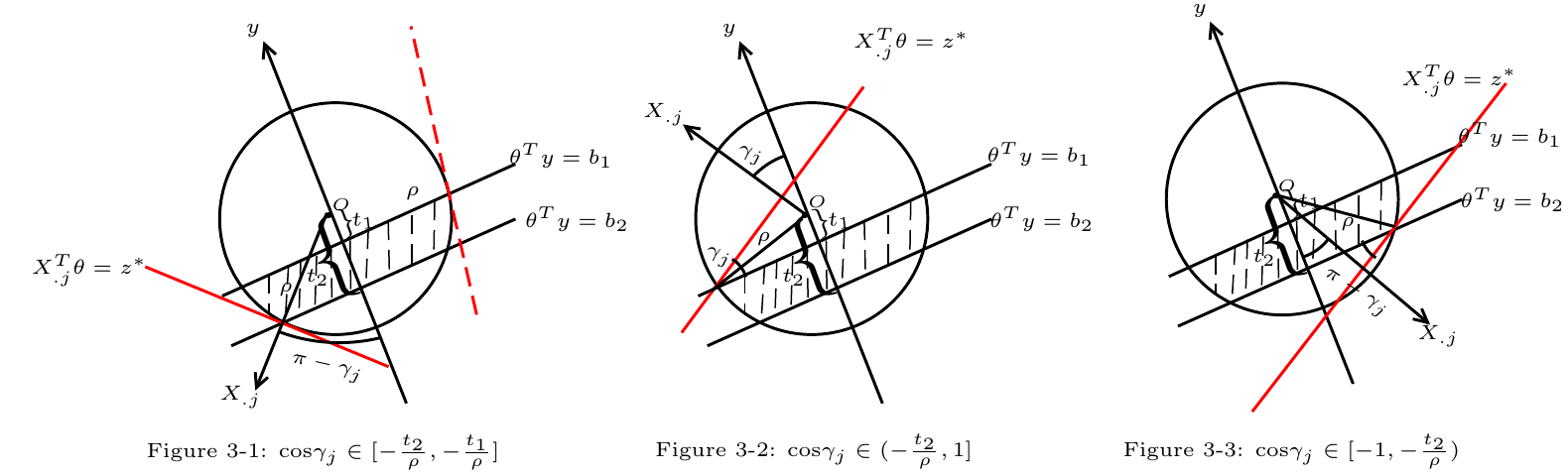}}
\caption{Cases 3: $b_{1}<0$ and $b_{2}<0$}
\end{figure}
Subcases 3-1: $d=\rho$ when $-\frac{t_{1}}{\rho}\leq \rm{cos}$$\gamma_{j}\leq-\frac{t_{2}}{\rho}$.

Subcases 3-2: If $\rm{cos}$$\gamma_{j}>-\frac{t_{2}}{\rho}$,
\begin{center}
$d=\sqrt{\rho^{2}-t_{1}^{2}}\cdot\rm{sin}\gamma_{j}-t_{1}\cdot\rm{cos}\gamma_{j}$.
\end{center}

 Subcases 3-3: If $\rm{cos}$$\gamma_{j}<-\frac{t_{1}}{\rho}$,
\begin{center}
$d=-\frac{t_{2}}{\rm{cos}\gamma_{j}}+\sqrt{\rho^{2}-t_{2}^{2}}\cdot\rm{sin}\gamma_{j}+t_{2}\cdot\rm{tan}\gamma_{j}\cdot\rm{sin}\gamma_{j}$.
\end{center}

To sum up, we can obtain the equation (\ref{eq:11}). Therefore,
$$P^{+}_{j}(\lambda)=d\|X_{\cdot j}\|_{2}+(\tau-\frac{1}{2})\sum\limits_{k=1}^{n}X_{kj}$$ with $d$ defined in  (\ref{eq:11}).
\end{proof}
The detailed result of $P^{-}_{j}(\lambda)$ can be gotten in the similar way by defining $\gamma_{j}$ as the angle between $-X_{\cdot j}$ and $\textbf{\textit{y}}$.
\begin{Lemma}
For any $\lambda\in(0,\lambda_{max}]$,
$$P^{-}_{j}(\lambda)=\tilde{d}\|X_{\cdot j}\|_{2}+(\tau-\frac{1}{2})\sum\limits_{i=1}^{n}X_{ij}$$ with $\tilde{d}$ having the form of (\ref{eq:11}), where $\gamma_{j}$ is the angle between $-X_{\cdot j}$ and $\textbf{\textit{y}}$.
\end{Lemma}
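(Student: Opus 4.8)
The plan is to reduce Lemma~3.4 to Lemma~3.3 by a sign substitution, since the two statements differ only in that the optimization direction $X_{\cdot j}$ is replaced by $-X_{\cdot j}$. By the definition recorded just before Lemma~3.3,
$$P^{-}_{j}(\lambda)=\underset{\theta\in\Theta}\max\,(-X_{\cdot j})^{T}\theta+(\tau-\tfrac{1}{2})\sum\limits_{i=1}^{n}X_{ij},$$
so it suffices to evaluate the support-function-type quantity $\underset{\theta\in\Theta}\max\,(-X_{\cdot j})^{T}\theta$. This is precisely the object $z^{*}=\underset{\theta\in\Theta}\max\,X_{\cdot j}^{T}\theta$ analyzed in Lemma~3.3, but with the feature direction $X_{\cdot j}$ replaced throughout by $-X_{\cdot j}$.

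First I would observe that the feasible set $\Theta=\left\{\tilde{\theta}\mid\langle\tilde{\theta},\textbf{\textit{y}}\rangle\leq b_{1},\langle\tilde{\theta},\textbf{\textit{y}}\rangle\geq b_{2},\|\tilde{\theta}\|_{2}\leq\rho\right\}$ does not involve $X_{\cdot j}$ at all: it is determined only by $\textbf{\textit{y}}$, $b_{1}$, $b_{2}$ and $\rho$. Consequently the two cutting hyperplanes $\{\langle\theta,\textbf{\textit{y}}\rangle=b_{1}\}$ and $\{\langle\theta,\textbf{\textit{y}}\rangle=b_{2}\}$, the radius $\rho$, and the distances $t_{1}=\frac{|b_{1}|}{\|\textbf{\textit{y}}\|_{2}}$, $t_{2}=\frac{|b_{2}|}{\|\textbf{\textit{y}}\|_{2}}$ from the origin are all unchanged. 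The only quantity that changes when we pass from $X_{\cdot j}$ to $-X_{\cdot j}$ is the angle of the optimization direction relative to $\textbf{\textit{y}}$, which becomes $\gamma_{j}=\angle(-X_{\cdot j},\textbf{\textit{y}})$ with $\rm{cos}\gamma_{j}=\frac{\langle -X_{\cdot j},\textbf{\textit{y}}\rangle}{\|X_{\cdot j}\|_{2}\|\textbf{\textit{y}}\|_{2}}$.

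Next I would set $z^{*}=\underset{\theta\in\Theta}\max\,(-X_{\cdot j})^{T}\theta$ and let $\tilde{d}$ denote the distance from the origin to the hyperplane $\left\{\theta\mid(-X_{\cdot j})^{T}\theta=z^{*}\right\}$, so that $z^{*}=\tilde{d}\,\|{-X_{\cdot j}}\|_{2}=\tilde{d}\,\|X_{\cdot j}\|_{2}$ using $\|{-X_{\cdot j}}\|_{2}=\|X_{\cdot j}\|_{2}$. The three geometric configurations of this supporting hyperplane relative to the sphere $\|\theta\|_{2}\leq\rho$ and the two slabs---tangency, intersection with the $b_{1}$-plane, intersection with the $b_{2}$-plane---are exactly the cases (i), (ii), (iii) treated in the proof of Lemma~3.3, and the branching on the signs of $b_{1}$ and $b_{2}$ is identical because $b_{1},b_{2}$ are untouched. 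Running that case analysis verbatim with $\gamma_{j}=\angle(-X_{\cdot j},\textbf{\textit{y}})$ therefore produces $\tilde{d}$ in precisely the form (\ref{eq:11}), whence $P^{-}_{j}(\lambda)=\tilde{d}\,\|X_{\cdot j}\|_{2}+(\tau-\frac{1}{2})\sum_{i=1}^{n}X_{ij}$.

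There is no genuine obstacle here, as the geometric work has already been carried out in Lemma~3.3 and the argument is entirely one of symmetry and substitution. The only points requiring a moment of care are the bookkeeping of the additive constant $(\tau-\frac{1}{2})\sum_{i}X_{ij}$ and the explicit verification that replacing $X_{\cdot j}$ by $-X_{\cdot j}$ perturbs \emph{only} the angle $\gamma_{j}$ while leaving $\rho$, $t_{1}$, $t_{2}$ and the $b_{1},b_{2}$ sign cases exactly as in Lemma~3.3; once this is checked, the conclusion follows immediately.
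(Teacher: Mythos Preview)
Your proposal is correct and mirrors the paper's own treatment: the paper does not give a separate proof but simply remarks that the result for $P^{-}_{j}(\lambda)$ ``can be gotten in the similar way by defining $\gamma_{j}$ as the angle between $-X_{\cdot j}$ and $\textbf{\textit{y}}$.'' Your reduction---observing that $\Theta$, $\rho$, $t_{1}$, $t_{2}$, $b_{1}$, $b_{2}$ are independent of $X_{\cdot j}$ so that only the angle changes under the substitution $X_{\cdot j}\mapsto -X_{\cdot j}$, while $\|{-X_{\cdot j}}\|_{2}=\|X_{\cdot j}\|_{2}$---is exactly this ``similar way'' spelled out in full.
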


In Lemma 3.3 and Lemma 3.4,  $P^{+}_{j}(\lambda)$ and $P^{-}_{j}(\lambda)$ show the closed-form expressions of given data, which can be easily computed. Combing these lemma and Theorem 3.2, we obtain the implementable safe feature screening rule for the  $\ell_{1}$-norm  quantile regression in the next theorem.
\begin{Theorem}
For given $\lambda\in(0,\lambda_{max}]$ and $j\in \{1,\cdots,p\}$. If $$\max\left\{P^{+}_{j}(\lambda),P^{-}_{j}(\lambda)\right\}<\lambda,$$
 then $\beta_{j}^{*}(\lambda)=0$, which means that  $X_{\cdot j}$ is uncorrelated to the $\tau_{th}$ quantile of $\textbf{y}$ and $X_{\cdot j}$ can be eliminated.
\end{Theorem}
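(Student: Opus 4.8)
The plan is to obtain this implementable rule as a short synthesis of three results already in hand: the basic screening test of Theorem 3.2, the feasible-set containment of Lemma 3.2, and the closed-form one-sided bounds of Lemmas 3.3 and 3.4. All of the genuinely technical labor — relaxing the dual box constraints to the circumscribed sphere region $\Theta$ and solving the two constrained maximization problems that yield $P_j^{+}(\lambda)$ and $P_j^{-}(\lambda)$ — has been carried out in those preceding lemmas, so the theorem reduces to a transitivity argument among the inequalities they supply.

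First I would restrict to the stated range $\lambda\in(0,\lambda_{max}]$, on which Lemma 3.2 guarantees $\theta^{*}(\lambda)\in\Theta$. Maximizing $X_{\cdot j}^{T}\theta$ and $-X_{\cdot j}^{T}\theta$ over $\Theta$ and adding back the shift $(\tau-\tfrac{1}{2})\sum_{i=1}^{n}X_{ij}$ produced, in the discussion immediately preceding Lemma 3.3, the two one-sided estimates $X_{\cdot j}^{T}\theta^{*}(\lambda)\le P_j^{+}(\lambda)$ and $-X_{\cdot j}^{T}\theta^{*}(\lambda)\le P_j^{-}(\lambda)$, whose explicit evaluations are recorded in Lemmas 3.3 and 3.4. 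Combining the two directions gives
\[
|X_{\cdot j}^{T}\theta^{*}(\lambda)|\le\max\bigl\{P_j^{+}(\lambda),P_j^{-}(\lambda)\bigr\},
\]
so the hypothesis $\max\{P_j^{+}(\lambda),P_j^{-}(\lambda)\}<\lambda$ immediately forces $|X_{\cdot j}^{T}\theta^{*}(\lambda)|<\lambda$. The concluding step is to invoke Theorem 3.2 with this strict inequality, which yields $\beta_{j}^{*}(\lambda)=0$; by the definition of an inactive feature this means $X_{\cdot j}$ is uncorrelated to the $\tau_{th}$ quantile of $\textbf{\textit{y}}$ and may be discarded.

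There is no substantive obstacle remaining; the only point that deserves explicit verification is that every bound used here was derived solely on $\lambda\in(0,\lambda_{max}]$. This is exactly the range over which the theorem is posed, so the argument is airtight, while the complementary regime $\lambda>\lambda_{max}$ is already disposed of by Lemma 3.1, where $\beta^{*}(\lambda)=0$ holds trivially. If I wanted the proof to be self-contained rather than a pointer, the mildly delicate part would be re-justifying the passage from $\Theta_{2}$ to its sphere relaxation $\Theta$ — i.e.\ that enlarging the feasible set can only increase the maxima and therefore preserves the one-sided inequalities — but since that relaxation is the content of Lemma 3.2, I would simply cite it and keep the final proof to the two-line transitivity above.
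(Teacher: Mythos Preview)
Your proposal is correct and matches the paper's approach exactly: the paper presents this theorem without an explicit proof, simply introducing it as the consequence of ``Combining these lemma and Theorem 3.2,'' which is precisely the transitivity argument you spell out. Your write-up is in fact more detailed than what the paper provides, but the logical content is identical.
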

\begin{Remark}
Theorem 3.1 can be extended to the weighted $\ell_{1}$-norm  quantile regression, which is given as
\begin{equation}\label{eq:14}
\underset{\beta}\min\sum\limits_{i=1}^{n}\rho_{\tau}(y_{i}-\textbf{x}^{T}_{i}\beta)+\lambda\sum\limits_{j=1}^{p}\|\omega\odot\beta\|,
\end{equation}
where $\omega=(\omega_{1},\omega_{2},\cdots,\omega_{p})^{T}\in\mathbb{R}^{p}$ and $\omega_{j}>0$ holds for all $j$. When the weight $\omega=(1,1,\cdots,1)^{T}$ in (\ref{eq:14}), it degrades to the $\ell_{1}$-norm quantile regression. For the model (\ref{eq:14}), $$\tilde{\lambda}_{max}=\frac{\|\Delta\|_{\infty}}{\underset{j}\min~\omega_{j}},$$
 where $\Delta$ is defined in (\ref{eq:new2}). With Lemma 3.3 and Lemma 3.4, we can obtain the next screening rule for the weighted $\ell_{1}$-norm  quantile regression.

For given $\lambda\in(0,\tilde{\lambda}_{max}]$ and $j\in \{1,\cdots,p\}$. If $$\max\left\{P^{+}_{j}(\lambda),P^{-}_{j}(\lambda)\right\}<\lambda\omega_{j},$$
 then $\beta_{j}^{*}(\lambda)=0$, which means that  $X_{\cdot j}$ is uncorrelated to the $\tau_{th}$ quantile of $\textbf{\textit{y}}$ and $X_{\cdot j}$ can be eliminated.
\end{Remark}

\section{Numerical results}
In order to evaluate the performance of the $\ell_{1}$-norm  quantile regression screening rule in Section 3, we do some numerical experiments in this section. All experiments are performed on the MATLAB R2018b with Intel(R) Core(TM) i5-8250U 1.60 CPU and 8G RAM.

For each data set, we run the proximal ADMM (Gu et al. \cite{G18}) along a sequence of 100 tuning parameters equally spaced on the $\lambda/\lambda_{max}$ from 0.01 to 1.  Same as the most papers (Tibshirani et al. \cite{T12}, Wang et al. \cite{W15}, Wang et al. \cite{W15b},  Ndiaye et al. \cite{N17}, Xiang et al. \cite{X17} and so on), we use two quantities to measure the screening rule, that are rejection ratio and speedup. The rejection ratio is defined as
$R=\frac{N_{f}}{N_{s}}.$ Under different tuning parameter $\lambda$, $N_{s}$ denotes the number of discarded features by the screening rule and $N_{f}$ denotes the  actual number of features with zero coefficient. This ratio measures the efficiency of the screening rule.  The speedup is defined as $S=\frac{T_{f}}{T_{s}},$
where $T_{s}$ and $T_{f}$ indicate the calculation time of solver with and without screening rule respectively.  This quantity, as the name illustrated, indicates  the reduced computational time because of the screening rule. The larger the rejection ratio and speedup are, the more efficient the screening rule is.
\subsection{Simulation}
We evaluate the screening rule on some simulation data, which are generated from the true linear regression
$$\textbf{\textit{y}}=X\beta^{*}+\epsilon$$
with $n=100$ and $p\in\left\{5000, 10000, 15000\right\}$. In the true model, $X$ is generated from the multivariate normal distribution with mean vector $\mu$ and covariance matrix $\Sigma$. Here, we consider two cases, that are $\Sigma^{1}=I_{p}$ and $\Sigma^{2}_{ij}=0.5^{|i-j|}$ as in Belloni\cite{B11}, Li and Zhu\cite{L08} and so on. To show the difference between features (every column of $X$ presents a feature), we set $\mu$ as $\mu(3:7)=10$,
$\mu(70:90)=5$ and $\mu([\frac{p}{2}]:[\frac{2p}{3}])=-2$. The true coefficient vector $\beta^{*}$ is set as follows.
$$\beta^{*}=(2,0,1.5,0,0,0.8,0,0,1,0,1.75,0,0,0.75,0,0,0.3,0_{p-16}).$$
The error $\epsilon\sim t(4)$. See, e.g.,  Fan \cite{F14} and Gu \cite{G18}.
\begin{figure}[htbp]
\begin{minipage}[t]{1\linewidth}
\centering
\includegraphics[width=2.3in]{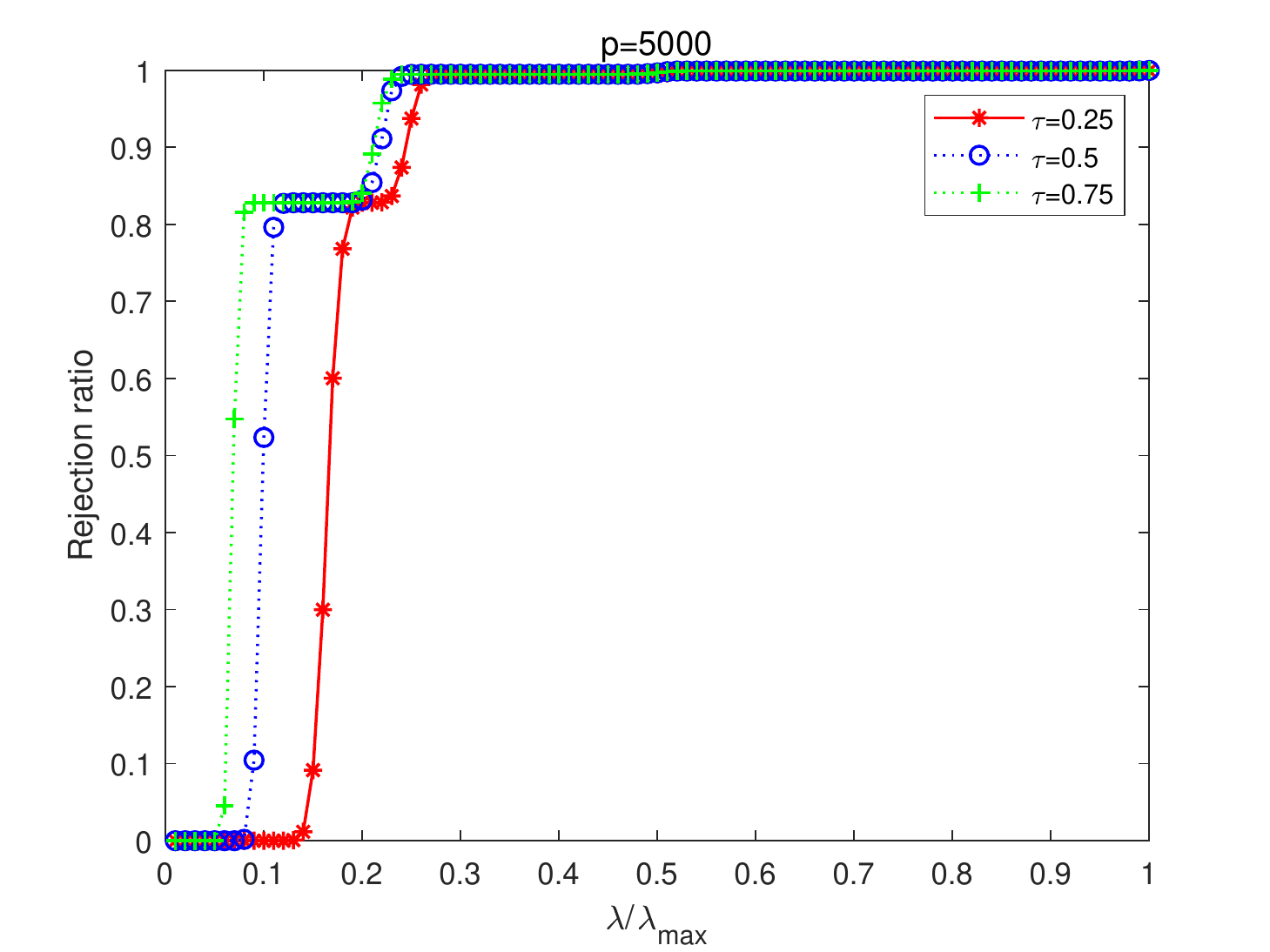}
\end{minipage}
\begin{minipage}[t]{1\linewidth}
\centering
\includegraphics[width=2.3in]{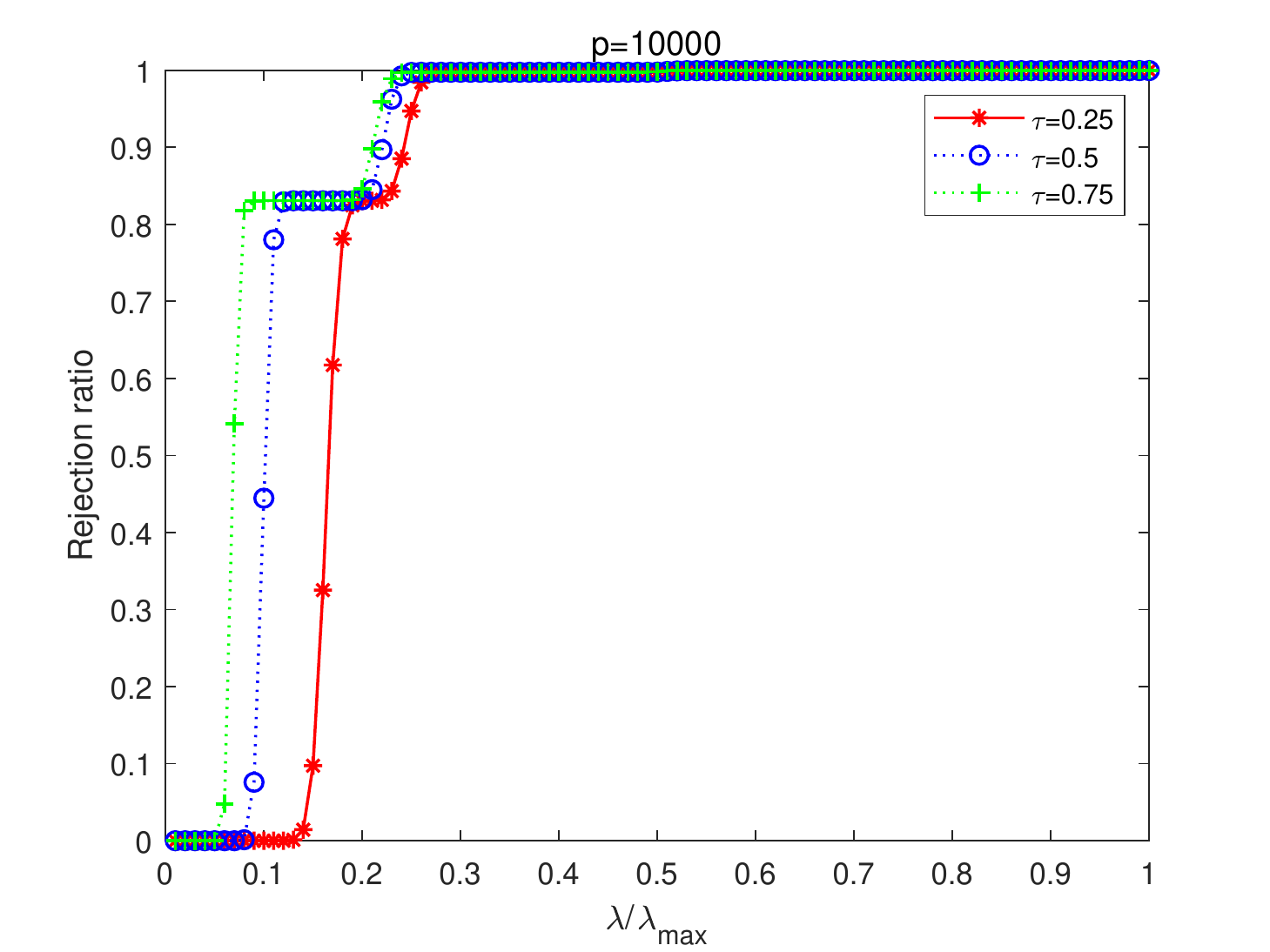}
\end{minipage}
\begin{minipage}[t]{1\linewidth}
\centering
\includegraphics[width=2.3in]{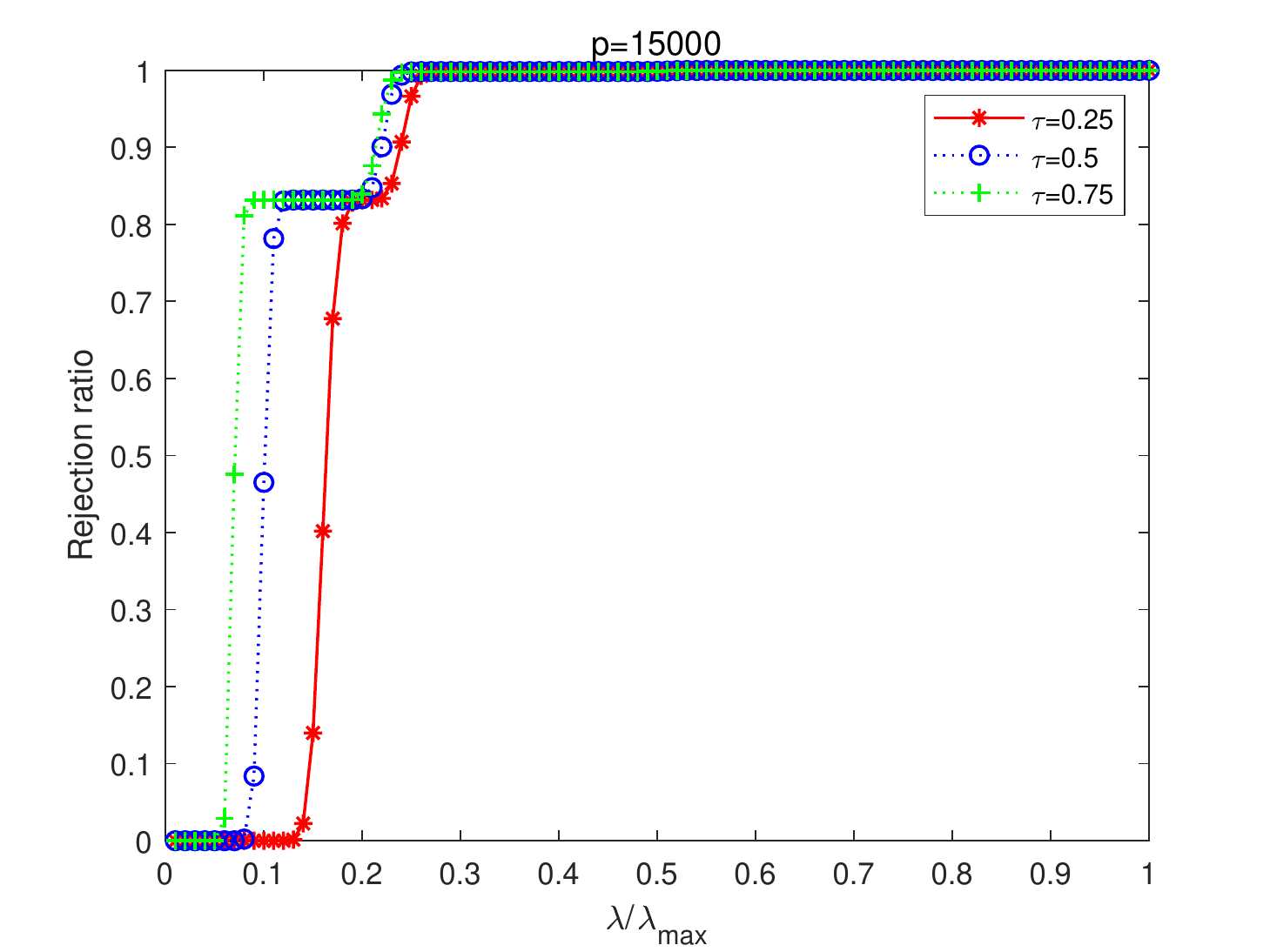}
\end{minipage}
\caption{The rejection ratio on $\Sigma^{1}$ with different $p$ and $\tau$. There are slight differences between $\Sigma^{1}$ and $\Sigma^{2}$ on the rejection ratio, so we omit these results on $\Sigma^{2}$.}
\end{figure}

From Fig. 4, for the simulated data, we conclude that $\tau=0.75$ performs best in rejection ratio. In this case, our screening rule identifies over 80 percent inactive features.  In addition, the gap of rejection ratios of $\tau=0.75$ and $\tau=0.5$  is very small.  In order to clearly present the different performance of our screening rule under different $p$, $\tau$ and $\Sigma$, we report the speedup in the TABLE 1.
\begin{table}[htbp]
\caption{ In this table, we set the sample size $n=100$ and report the speedup of different datasets under different $p$, $\Sigma$ and $\tau$. These reported speedup are the means of 10 simulation results. Here, we also report the standard deviations (sd) of speedup values.}
\centering
\begin{tabular}{|c|c|c|c|c|c|c|}
\hline
\multicolumn{2}{|c|}{$p,\Sigma$}                        & $\tau$ & $T_{f}(s)$ & $T_{s}(s)$ & speedup & sd \\ \hline
\multicolumn{2}{|c|}{\multirow{3}{*}{5000,$\Sigma^{1}$}}  & 0.25                & 30.0053  & 9.3310   & 3.2157  & 0.0299            \\
\multicolumn{2}{|c|}{}                           & 0.50                & 29.8827  & 6.7459   & 4.4298  & 0.0589            \\
\multicolumn{2}{|c|}{}                           & 0.75                & 22.2410  & 5.0821   & 4.3763  & 0.0652            \\ \hline
\multicolumn{2}{|c|}{\multirow{3}{*}{10000,$\Sigma^{1}$}} & 0.25       &  112.2569  &29.9057 &  3.7537 &   0.0340          \\
\multicolumn{2}{|c|}{}                           & 0.50                & 112.5967 & 20.5164   & 5.4881 & 0.0526            \\
\multicolumn{2}{|c|}{}                           & 0.75                & 84.6651  & 14.0857   & 6.0107 & 0.0574            \\ \hline
\multicolumn{2}{|c|}{\multirow{3}{*}{15000,$\Sigma^{1}$}} & 0.25       & 318.0094 & 78.7775  & 4.0368 &  0.1530           \\
\multicolumn{2}{|c|}{}                           & 0.50                & 250.6742  & 41.8572   & 5.9888 &  0.0495                 \\
\multicolumn{2}{|c|}{}                           & 0.75                &  186.1347 & 27.2497  & 6.8307  &  0.0428                 \\ \hline
\multicolumn{2}{|c|}{\multirow{3}{*}{5000,$\Sigma^{2}$}}  & 0.25                & 29.2768  & 9.2876   & 3.1522  &0.0420             \\
\multicolumn{2}{|c|}{}                           & 0.50                         &29.3762   & 6.7878  &4.3278   &  0.0749           \\
\multicolumn{2}{|c|}{}                           & 0.75                         &21.8868  & 5.1286  & 4.2676  & 0.0723           \\ \hline
\multicolumn{2}{|c|}{\multirow{3}{*}{10000,$\Sigma^{2}$}} & 0.25                & 111.4912  & 30.4614 & 3.6601  & 0.0479                  \\
\multicolumn{2}{|c|}{}                           & 0.50                & 110.9368& 20.7868   &5.3369  & 0.0341            \\
\multicolumn{2}{|c|}{}                           & 0.75                &  82.8418 &14.4750   &5.7231  &  0.0374           \\ \hline
\multicolumn{2}{|c|}{\multirow{3}{*}{15000,$\Sigma^{2}$}} & 0.25       &302.2332          & 77.3164         & 3.9090        & 0.0472                 \\
\multicolumn{2}{|c|}{}                           & 0.50                & 312.9282         &  53.6676        & 5.8309       &  0.1181                \\
\multicolumn{2}{|c|}{}                           & 0.75                & 232.9686        & 35.3582        &  6.5888       & 0.2220                 \\ \hline
\end{tabular}
\end{table}

According to the results in  TABLE 1, we have some conclusions. Firstly, our screening rule shrinks the computational time in different degrees. The smallest speedup is 3.1522 and the largest is 6.8307.  Secondly, with the same sample size, the speedup value increases with $p$ increasing. From TABLE 1, we know that $T_{f}$ rapidly increase when $p$ increases, while $T_{s}$ increases slightly.   Thirdly, the standard variances of the speedup values are relatively small. These standard variances show that our screening rule has a stable performance on speedup. Finally, there are slightly differences between  $\Sigma=\Sigma^{1}$ and  $\Sigma=\Sigma^{2}$, which means our screening rule is stable on data sets with uncorrelated features or correlated features.

\subsection{Real data}

We analyze some real data sets, named Colon\cite{A99}, Srbct\cite{K01}, Lymphoma\cite{A00}, Brain\cite{P02}, Prostate\cite{B10}, Leukemia , Reuters21578, 20Newsgroups and TDT2. Obviously, not all feature have connections with the $\tau_{th}$ quantile of the response variable. Therefore, we do not need to input all these features when calculating their relationships, which means the screening rule is needed before dealing with this issue. We apply the screening rule to $\tau=0.25$, $0.5$ and $0.75$ as Scheetz et al. \cite{S06}, Peng and Wang \cite{P15}, Gu et al. \cite{G18} and so on.
\begin{figure}[htbp]

\subfigure[Colon $(62\times2000)$]{
\begin{minipage}[t]{0.8\linewidth}
\centering
\includegraphics[width=2.2in]{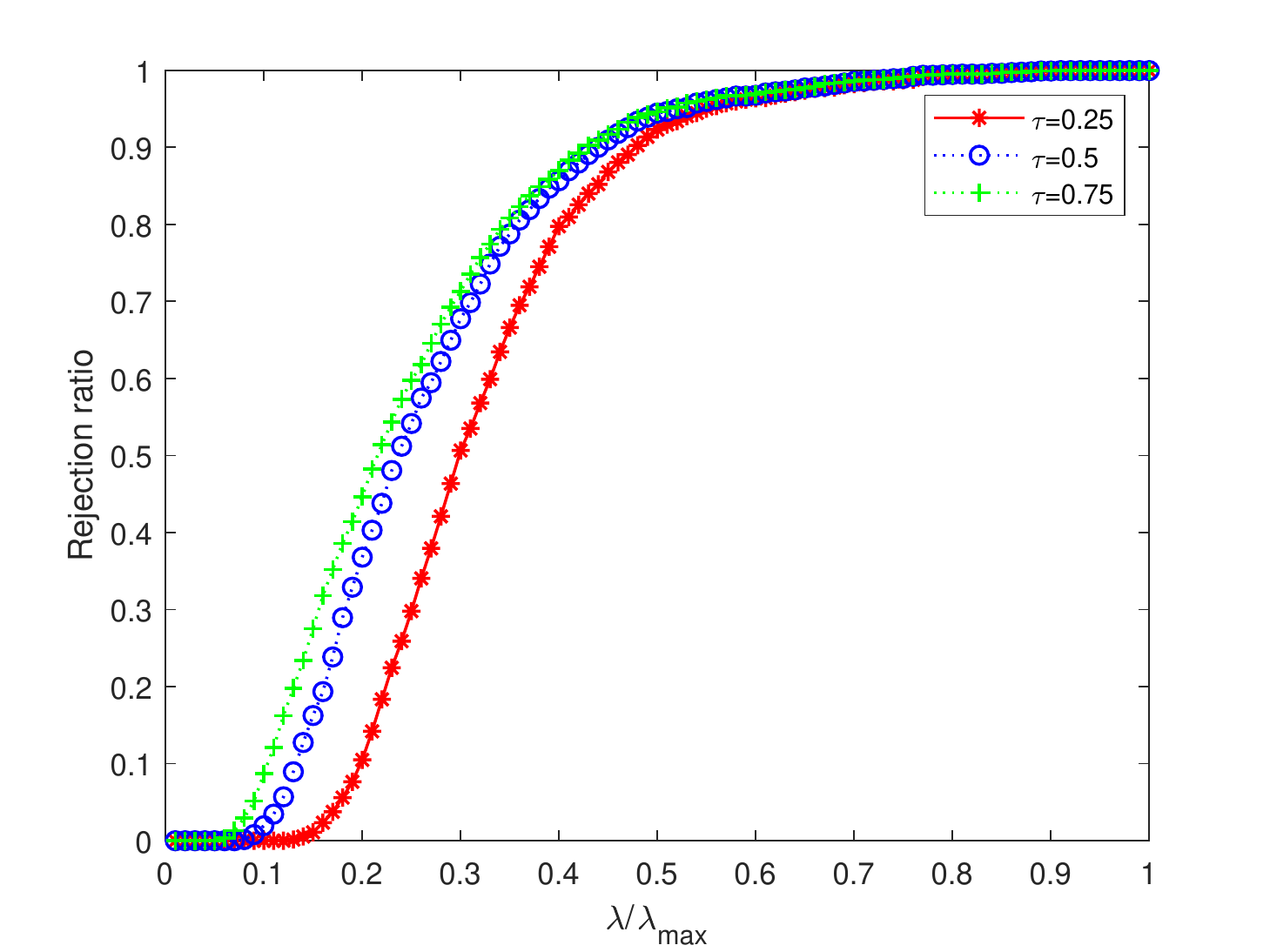}
\end{minipage}
}%

\subfigure[Srbct $(63\times2308)$]{
\begin{minipage}[t]{0.8\linewidth}
\centering
\includegraphics[width=2.2in]{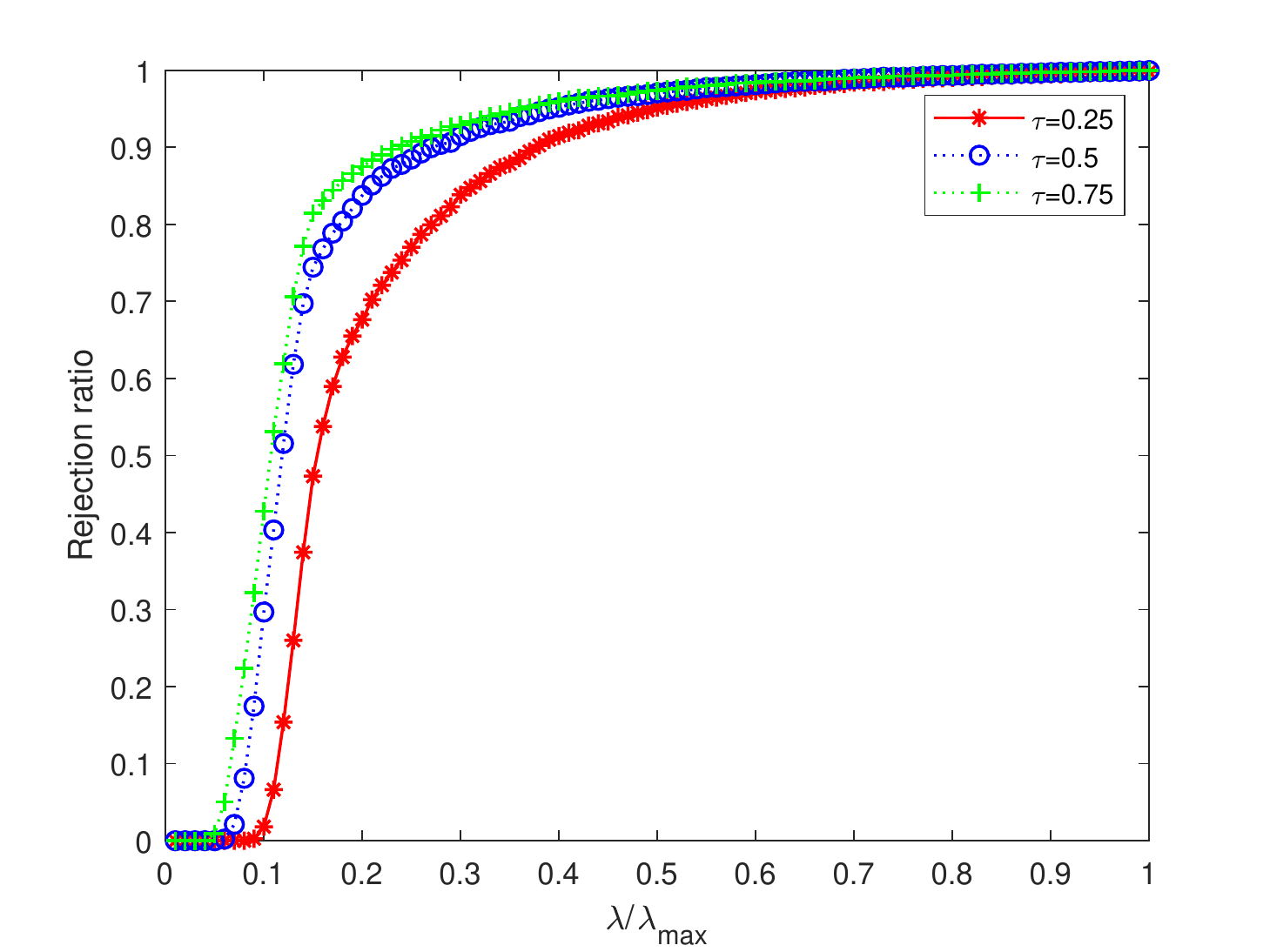}
\end{minipage}
}

\subfigure[Lymphoma $(62\times4026)$]{
\begin{minipage}[t]{0.8\linewidth}
\centering
\includegraphics[width=2.2in]{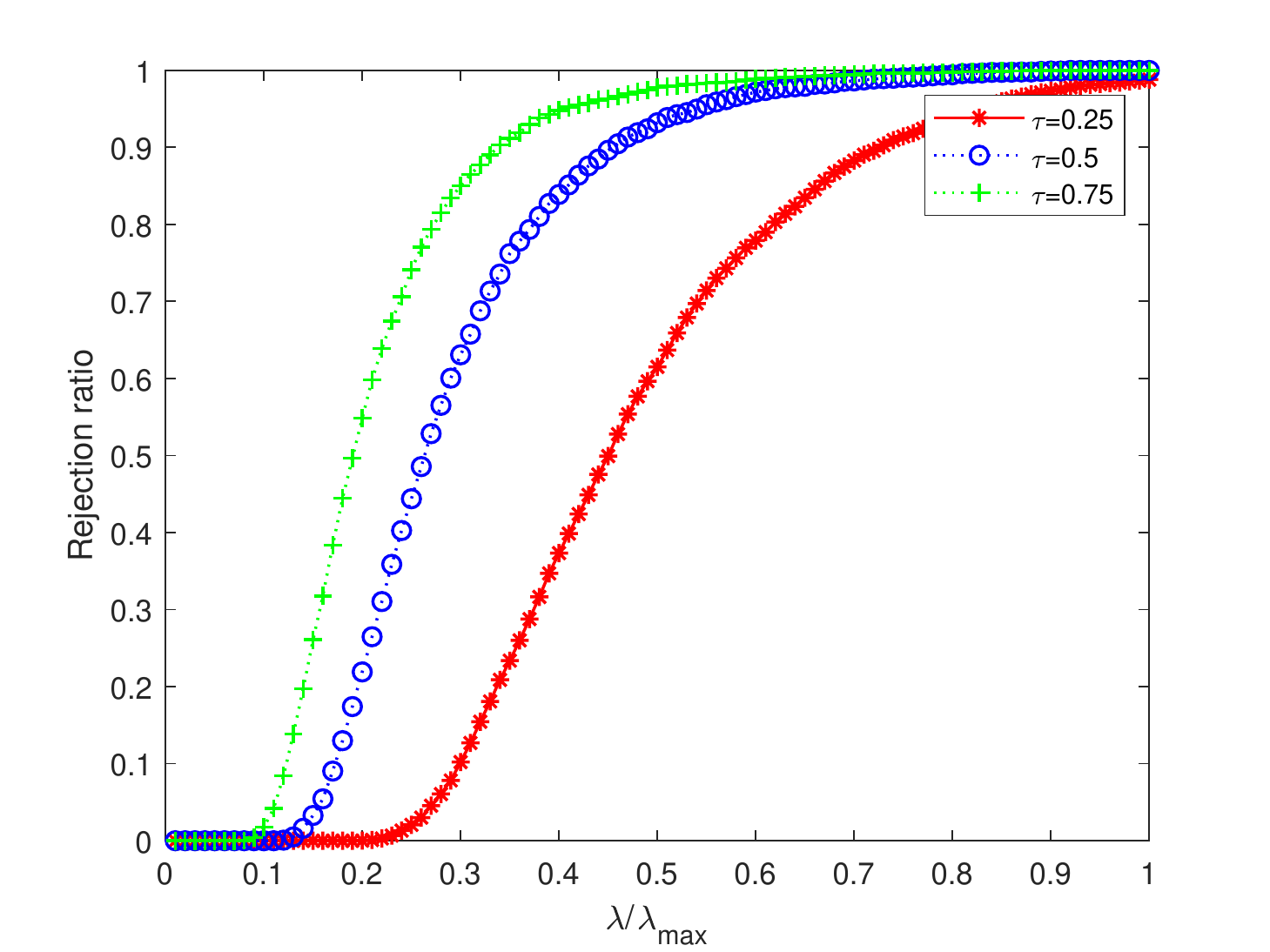}
\end{minipage}
}

\subfigure[Brain $(42\times5597)$]{
\begin{minipage}[t]{0.8\linewidth}
\centering
\includegraphics[width=2.2in]{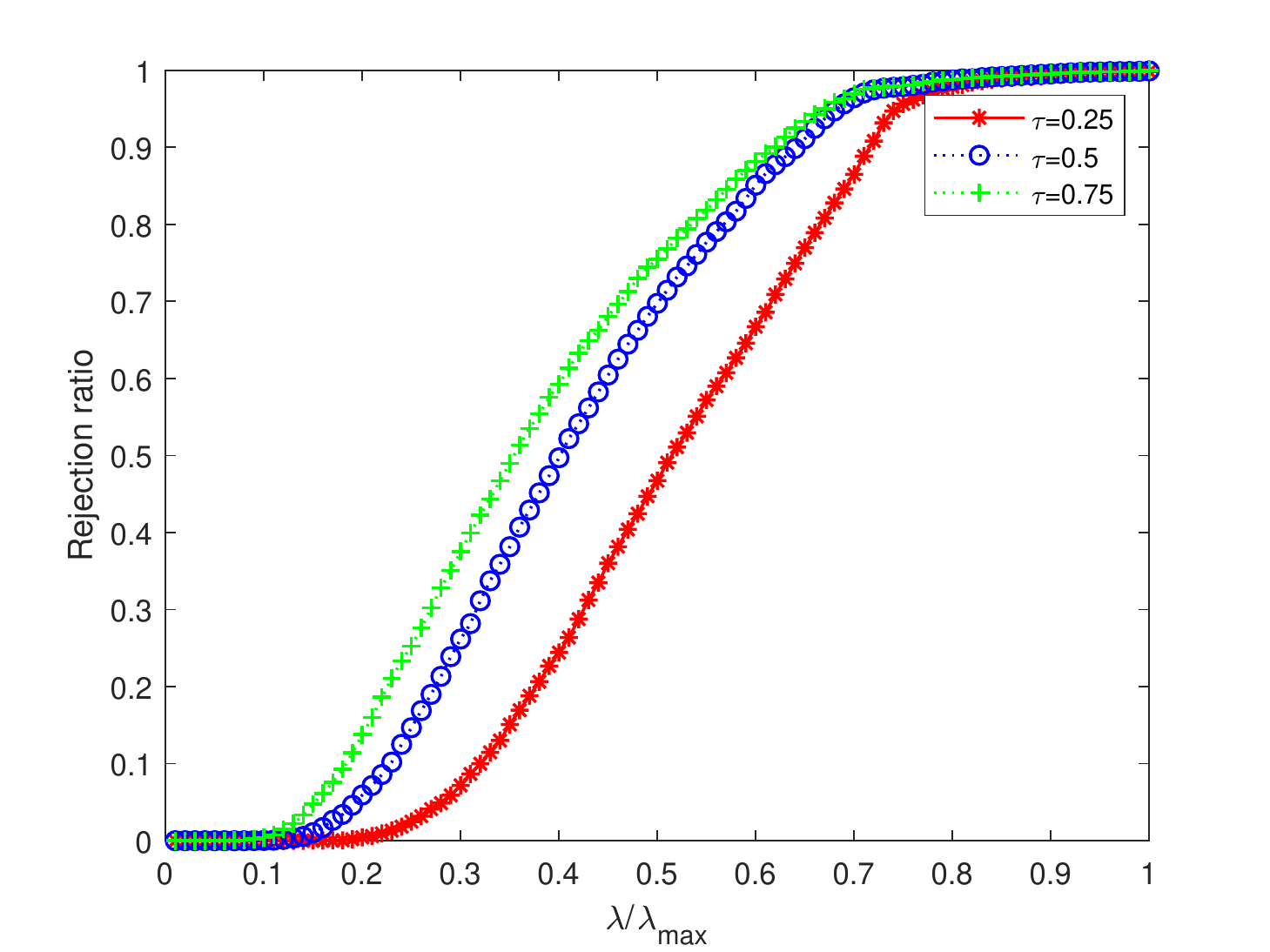}
\end{minipage}
}

\subfigure[Prostate $(102\times6033)$]{
\begin{minipage}[t]{0.8\linewidth}
\centering
\includegraphics[width=2.2in]{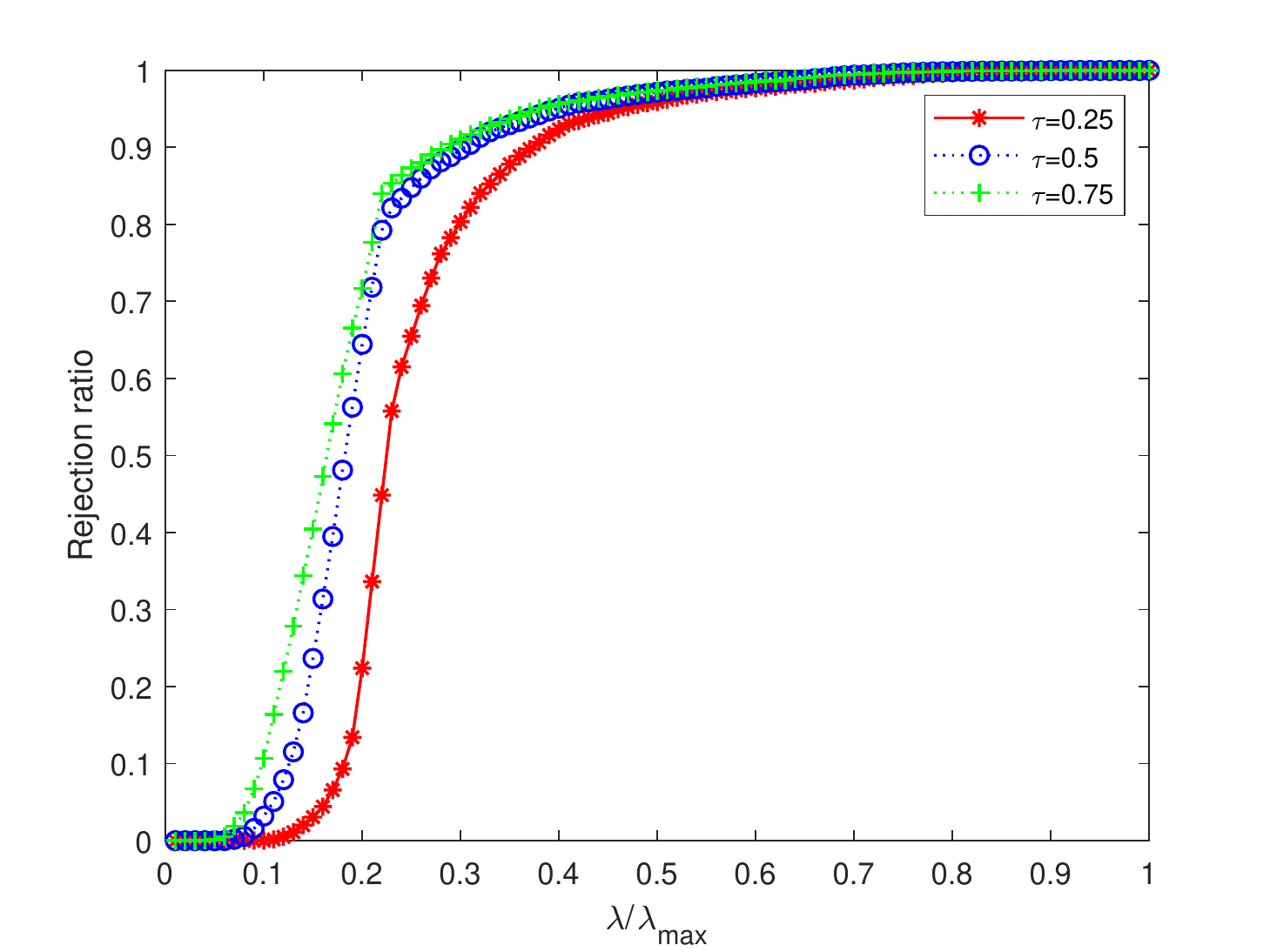}
\end{minipage}
}
\caption{The rejection ratio under different data sets.}
\end{figure}
\begin{figure}[htbp]
\subfigure[Leukemia $(38\times7128)$]{
\begin{minipage}[t]{0.8\linewidth}
\centering
\includegraphics[width=2.2in]{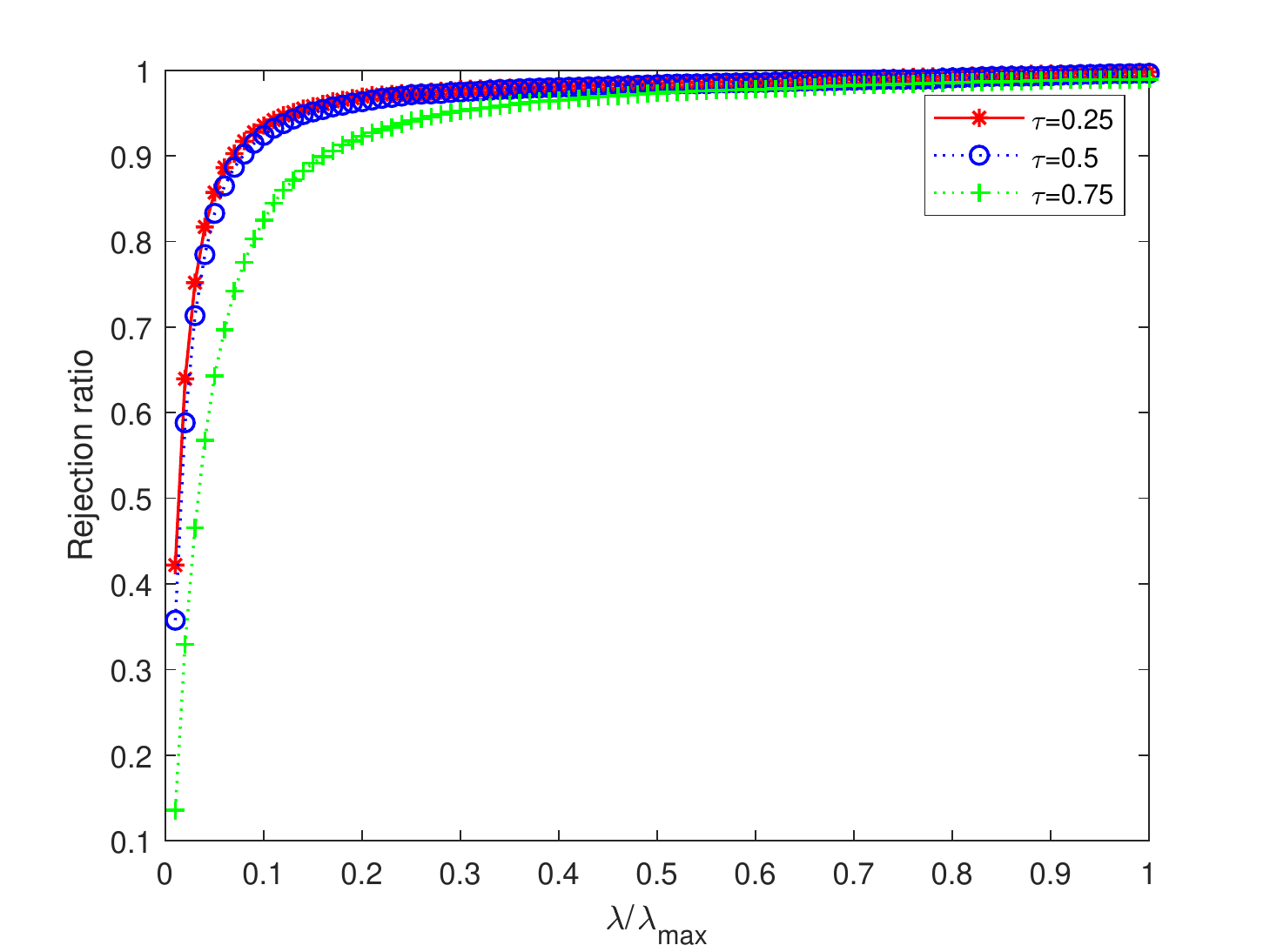}
\end{minipage}
}

\subfigure[Reuters21578 $(8293\times18933)$]{
\begin{minipage}[t]{0.8\linewidth}
\centering
\includegraphics[width=2.2in]{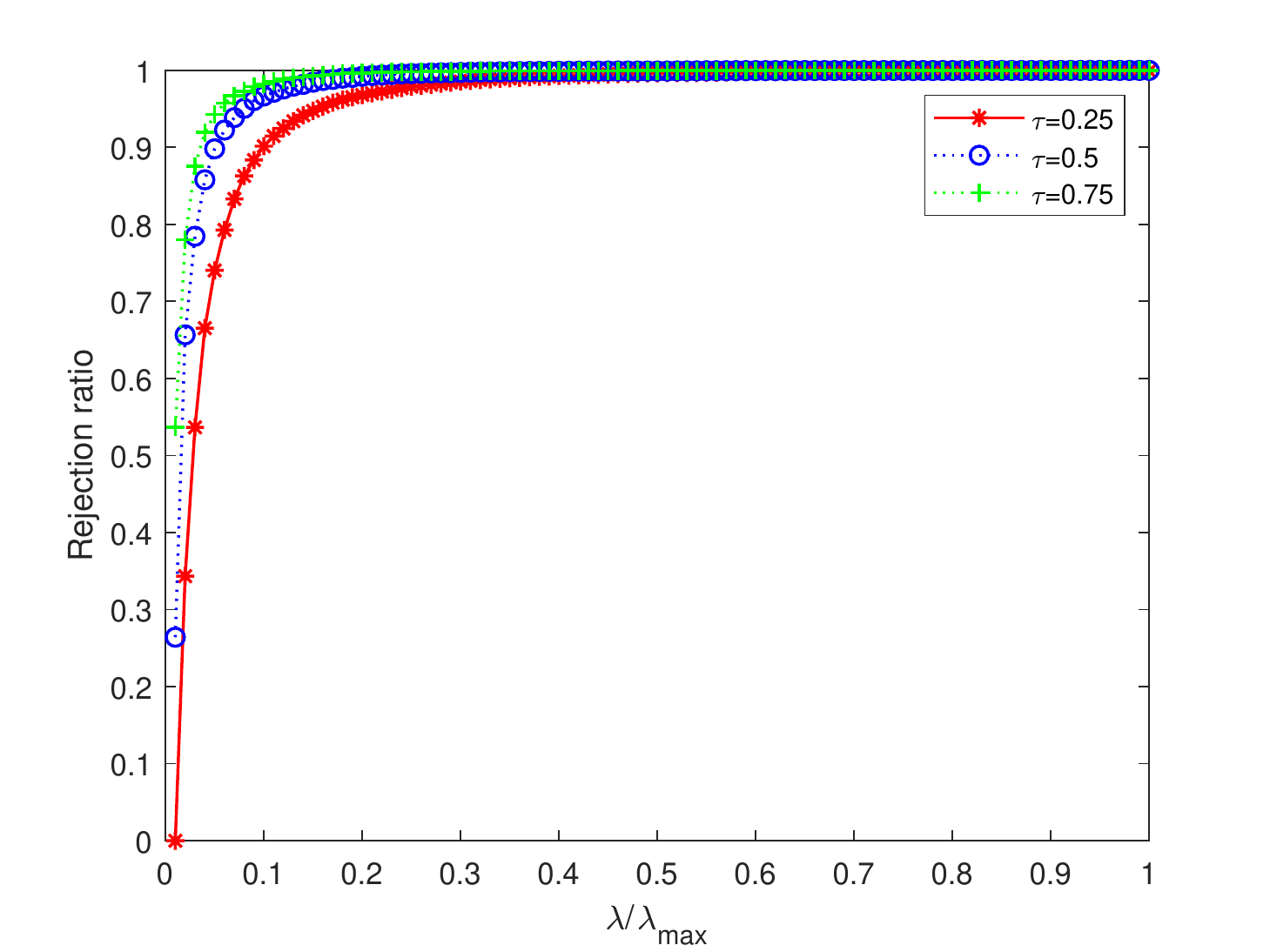}
\end{minipage}
}%

\subfigure[20Newsgroups $(11314\times26214)$]{
\begin{minipage}[t]{0.8\linewidth}
\centering
\includegraphics[width=2.2in]{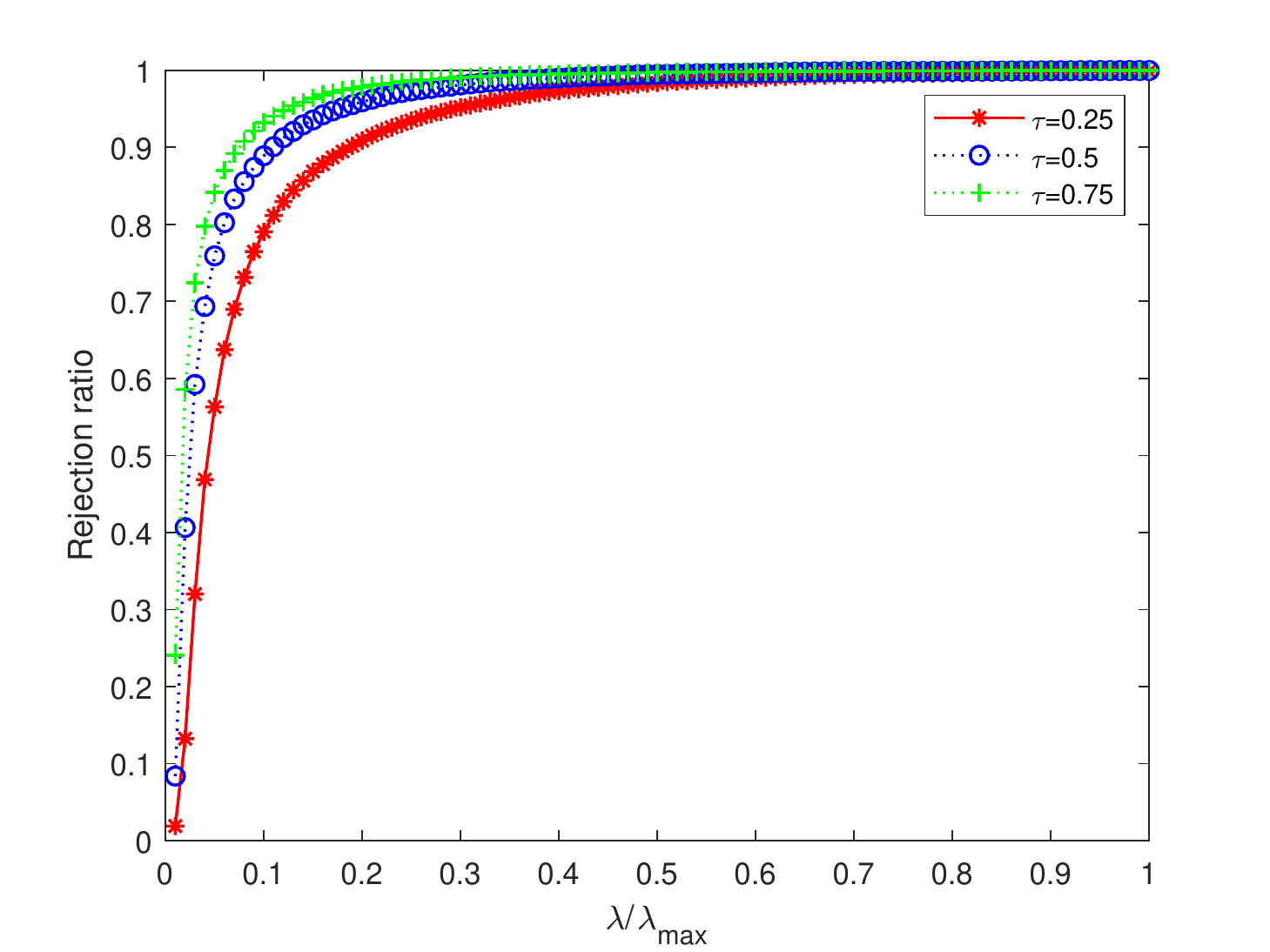}
\end{minipage}
}

\subfigure[TDT2 $(9394\times36771)$]{
\begin{minipage}[t]{0.8\linewidth}
\centering
\includegraphics[width=2.2in]{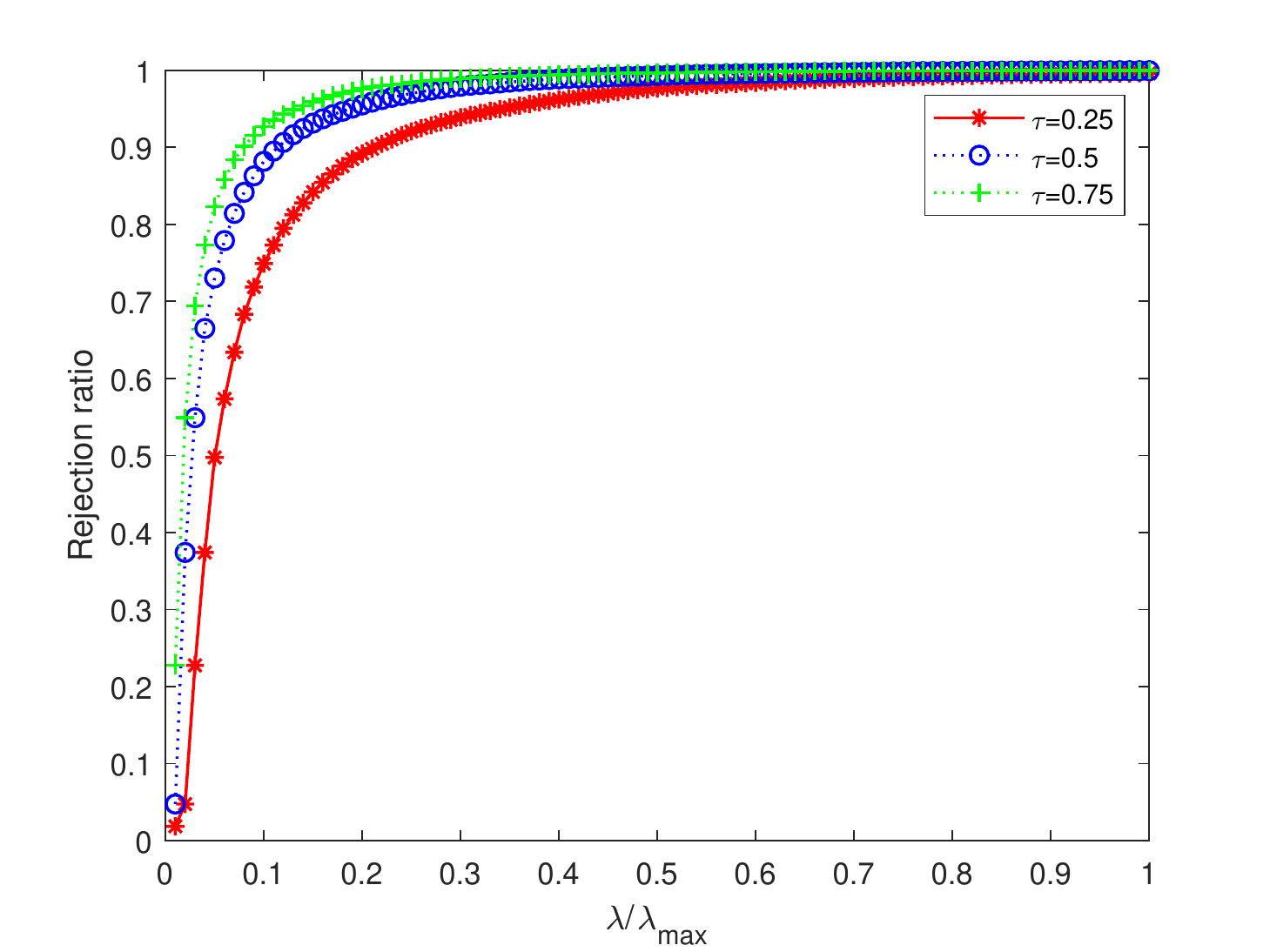}
\end{minipage}
}
\caption{The rejection ratio under different data sets.}
\end{figure}

\begin{figure}[htbp]

\subfigure{
\begin{minipage}[t]{1\linewidth}
\centering
\includegraphics[width=3.5in]{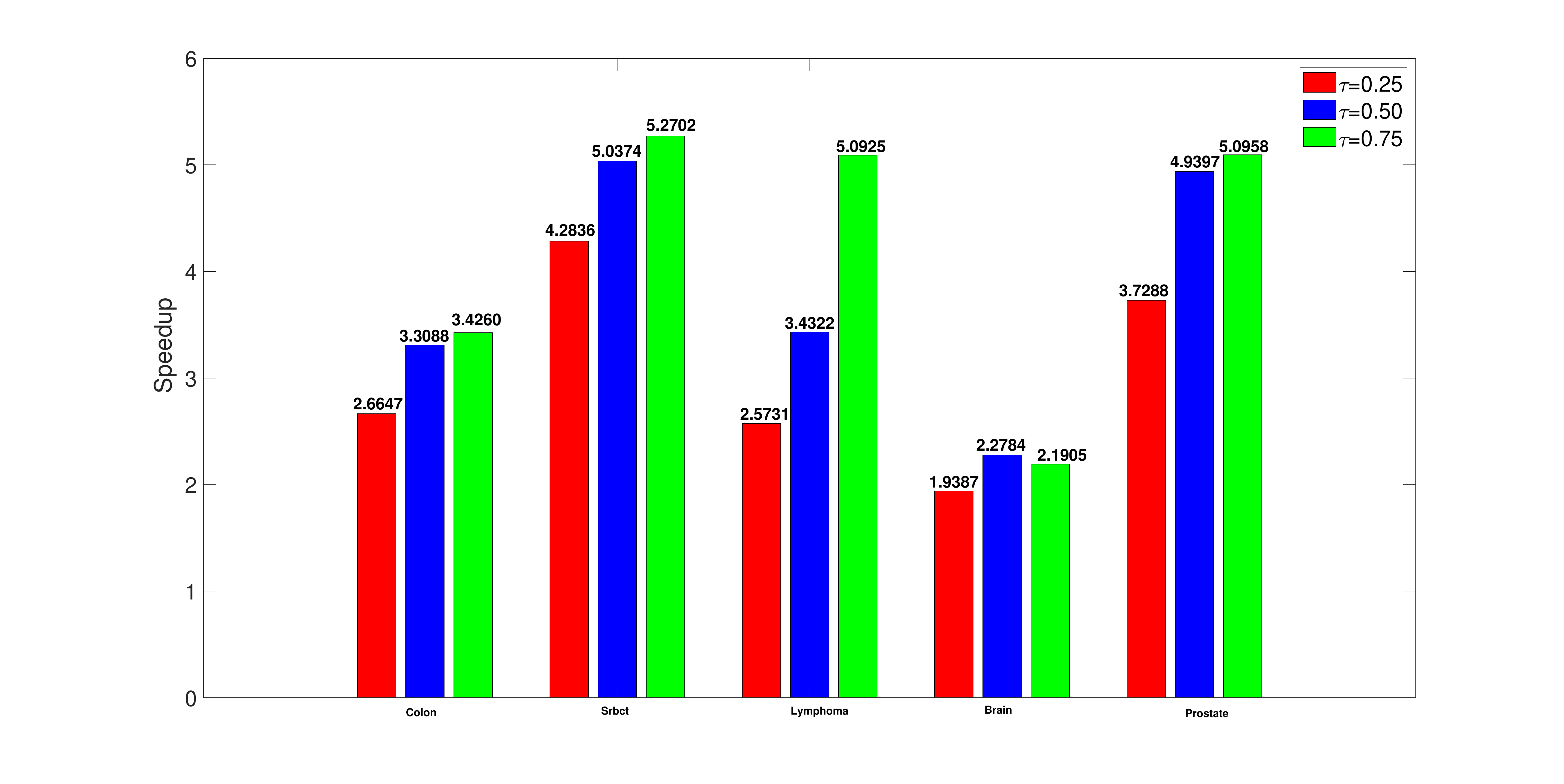}
\end{minipage}
}%

\subfigure{
\begin{minipage}[t]{1\linewidth}
\centering
\includegraphics[width=3.5in]{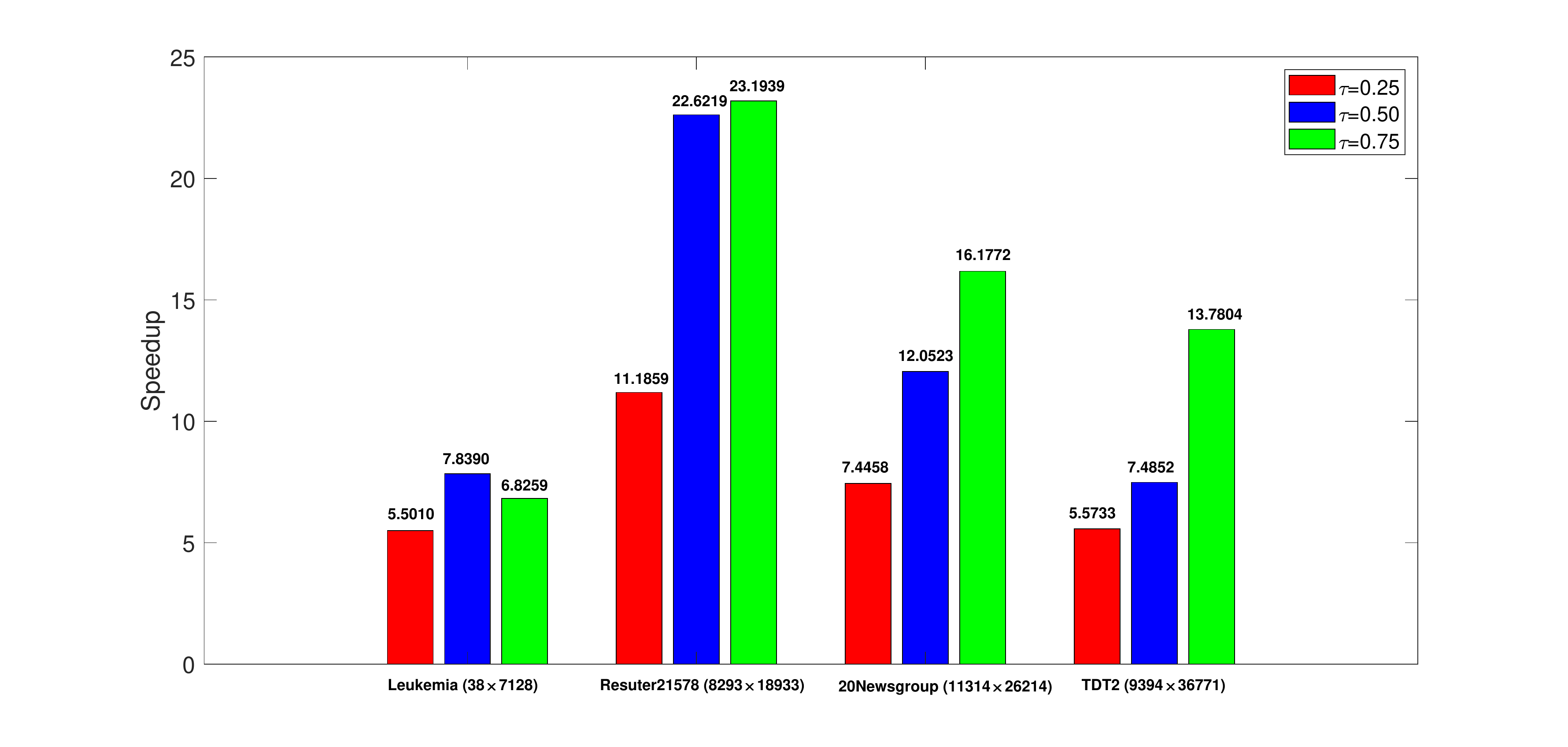}
\end{minipage}
}%
\caption{The speedup of different datasets under different $\tau$.}
\end{figure}

From results in Fig. 5, Fig. 6 and Fig. 7, we have the following conclusions. (i) Our screening rule eliminates sufficient number of inactive features in all these data sets.   The rejection ratio of these data sets are almost near 1.  (ii) The speedup value differs with the data scale. In these data sets, the speedup of these data sets varies between 2 to 23. (iii) Except the Leukemia and Brain, the speedup value increases with the parameter $\tau$ increasing. According to these results, we conclude that our screening rule is efficient on these high-dimensional data sets.

\section{Conclusion}
In this paper, we introduce the dual circumscribed sphere technique to estimate the dual solution. Based on this technique, we  build up a safe feature screening rule for the $\ell_{1}$-norm  quantile regression, which has a closed-form formula of given data and can be computed with low cost.  In the numerical experiments,  we adopt the proximal ADMM to solve the $\ell_{1}$-norm  quantile regression and illustrate the efficiency of our screening rule, which shows that the screening rule performs well in eliminating inactive features and saving up to 23 times of computational time in some cases.  Actually, our screening rule can be embedded into any algorithm or solver for solving this model.


%


\ifCLASSOPTIONcompsoc
  \section*{Acknowledgments}
\else
  \section*{Acknowledgment}
\fi This work was supported by the National Science Foundation of China (11671029).

\ifCLASSOPTIONcaptionsoff
  \newpage
\fi

\end{document}